\newtheorem{theorem}{Theorem}
\newtheorem{corollary}{Corollary}
\newtheorem{proposition}{Proposition}
\newtheorem{lemma}{Lemma}
\theoremstyle{definition}
\newtheorem{definition}{Definition}
\newtheorem{example}{Example}
\newtheorem{observation}{Observation}
\newtheorem{open}{Open Question}
\renewenvironment{open}{\begin{tcolorbox}[colback=gray!7, colframe=black, rounded corners, left=2pt, right=2pt] \vspace*{-2pt} \textbf{Open Question:}}{ \vspace*{-2pt}\end{tcolorbox}}
\newcommand{\leqnomode}{\tagsleft@true}
\newcommand{\reqnomode}{\tagsleft@false}
\renewcommand{\paragraph}[1]{\smallskip\noindent\textbf{#1}}
\renewcommand{\le}{\leqslant}
\renewcommand{\leq}{\leqslant}
\renewcommand{\ge}{\geqslant}
\renewcommand{\geq}{\geqslant}
\newcommand{\argmin}{\arg\min}
\newcommand{\bbR}{\mathbb{R}}
\newcommand{\bbN}{\mathbb{N}}
\renewcommand{\tilde}{\widetilde}
\DeclarePairedDelimiter\floor{\lfloor}{\rfloor}
\DeclarePairedDelimiter\ceil{\lceil}{\rceil}
\DeclarePairedDelimiter\set{\{}{\}}           
\DeclarePairedDelimiter\abs{\lvert}{\rvert}   
\newcommand{\spliddit}{\url{Spliddit.org}}
\renewcommand{\vec}[1]{#1}
\newcommand{\vals}{\vec{v}}
\newcommand{\NW}{\operatorname{NW}}
\newcommand{\tv}{\tilde{v}}
\newcommand{\allocs}{\mathcal{F}}
\newcommand{\calI}{\mathcal{I}} 
\newcommand{\calC}{\mathcal{C}} 
\newcommand{\half}{\sfrac{1}{2}}
\newcommand{\efwc}{EF1\textsubscript{WC}\xspace}
\newcommand{\poplus}{PO\textsuperscript{+}\xspace}
\newcommand{\card}[1]{\left|{#1}\right|}
\newcommand{\ubar}[1]{\text{\b{$#1$}}}
\newcommand{\efoo}{EF$^1_1$\xspace}
\newcommand{\xx}{x}
\newcommand{\yy}{y}
\newcommand{\pp}{p}
\newcommand{\pref}{\ensuremath{\sigma}}
\newcommand{\hierarchy}{\ensuremath{\mathcal{H}}}
\newcommand{\bb}{\ensuremath{\mathsf{bb}}}
\newcommand{\afull}{\ensuremath{\mathrm{full}}}
\newenvironment{myquote}%
  {\list{}{\leftmargin=0.3in\rightmargin=0.3in}\item[]}%
  {\endlist}
\title{Constrained Fair and Efficient Allocations}
\author{Benjamin Cookson \and Soroush Ebadian \and Nisarg Shah}
\date{%
    University of Toronto\\%
    \{bcookson,soroush,nisarg\}@cs.toronto.edu
}
\begin{document}
\maketitle

\begin{abstract}
    Fairness and efficiency have become the pillars of modern fair division research, but prior work on achieving both simultaneously is largely limited to the \emph{unconstrained} setting. We study fair and efficient allocations of indivisible goods under additive valuations and various types of allocation feasibility constraints, and demonstrate the unreasonable effectiveness of the maximum Nash welfare (MNW) solution in this previously uncharted territory. 
    
    Our main result is that MNW allocations are \half-envy-free up to one good (EF1) and Pareto optimal under the broad family of (arbitrary) matroid constraints. We extend these guarantees to \emph{complete} MNW allocations for base-orderable matroid constraints, and to a family of non-matroidal constraints (which includes balancedness) using a novel ``alternate worlds'' technique. We establish tightness of our results by providing counterexamples for the satisfiability of certain stronger desiderata, but show an improved result for the special case of goods with copies~\citep{GHLT24}. Finally, we also establish novel best-of-both-worlds guarantees for goods with copies and balancedness. 
\end{abstract}

\section{Introduction}\label{sec:introduction}
Fair division of resources among agents is a primitive that has applications, both to multiagent systems~\citep{CDEL+06} and to everyday problems such as estate division and divorce settlement~\citep{Shah17}. Over the last decade, the fair division literature has undergone a dramatic transformation. The pioneering work of \cite{CKMP+19} established that, under additive valuations, the so-called maximum Nash welfare (MNW) allocations, which (informally) maximize the product of agent utilities, simultaneously satisfy two appealing guarantees: a fairness criterion known as \emph{envy-freeness up to one good} (EF1), which demands that no agent prefer the allocation of another agent (modulo a single good) to her own, and an efficiency criterion known as \emph{Pareto optimality} (PO), which demands that no alternative allocation be able to make an agent happier without making any agent worse off. These provable fairness and efficiency guarantees have been critical to their use in the real world via the not-for-profit website \spliddit~\citep{Shah17}. 

Ever since then, the combination of fairness and efficiency, in the form of approximate envy-freeness and Pareto optimality, has become the guiding principle for seeking fair division solutions, e.g., for subclasses of additive valuations~\citep{HSVX21}, for non-additive valuations~\citep{BCIZ21,BS24}, when addressing manipulations~\citep{PV22}, or for allocating chores~\citep{EPS22,GMQ22}, or for allocating public goods~\citep{FMS18,EFS24}. 

However, in many real-world fair division problems, there are feasibility constraints on the bundle that each agent can receive. Examples include course allocation~\citep{BCKO17}, public housing assignment~\citep{BCHS+20}, or allocation of conference submissions to reviewers~\citep{garg2010assigning}. Unfortunately, the literature on \emph{constrained} fair division has been largely limited to seeking only fairness guarantees. 
\begin{itemize}
\item \cite{biswas2018fair} show the existence of an EF1 allocation subject to cardinality constraints, where the goods are partitioned into categories and each agent must be allocated at most a prescribed maximum number of goods from each category.
\item \cite{biswas2019matroid} extend this to any base-orderable matroid constraint, where the bundle of goods allocated to each agent must be an independent set of a given base-orderable matroid (cardinality constraints form a partition matroid, which is a special case), when agents have identical additive valuations.\footnote{\cite{biswas2019matroid} incorrectly state their result for an arbitrary matroid constraint in the original paper, but later versions correctly state that the result holds for base-orderable matroids. The existence of an EF1 allocation here remains open for general matroid constraints, even for identical additive valuations.}
\item \cite{GHLT24} study a special case of cardinality constraints, which they refer to as \emph{goods with copies}, motivated by the fact that it in turn subsumes chore division as a special case. They define an appealing strengthening of EF1, termed \efwc, and establish its existence for restricted valuation classes.  
\item The popular round robin algorithm yields an EF1 allocation subject to \emph{balancedness}, where all agents must be assigned bundles of roughly equal cardinality (differing by at most one)~\citep{CKMP+19}. 
\item A famous non-matroidal constraint is where the goods are vertices of an undirected graph, and the bundle allocated to each agent must form a connected subset; when the graph is a line, an EF1 allocation is known to exist~\citep{igarashi2023cut}, and for general graphs, a \half-EF1 allocation with up to $n-1$ unallocated goods is known to exist under restricted preferences~\citep{CMS22}. 
\end{itemize}
In addition to the above summary of related work, we provide comparisons to other pieces of related work throughout the paper, and also refer the reader to the extensive survey on constraints in fair division by \cite{suksompong2021constraints}. Quite surprisingly, the existence of an allocation that satisfies both (even approximate) EF1 and PO remains severely understudied in these constrained domains. The only exception is the work on budget constraints, where each good has a size and the total size of goods allocated to any agent must be at most a threshold. \cite{gan2023approximation} show that an EF1 allocation always exists, and \cite{wu2021budget} show that any MNW allocation subject to such a constraint is \sfrac{1}{4}-EF1 and PO. Our main research question is to expand on this line of work:
\begin{myquote}
\emph{Under which types of feasibility constraints do (approximately) fair and efficient allocations exist?}	
\end{myquote}

\subsection{Our Results}\label{sec:results}
Our main result is that every MNW allocation is \half-EF1 and PO under the broad class of (arbitrary) matroid constraints (see \Cref{sec:prelims} for a formal definition). While these allocations are efficient, they can be \emph{incomplete}, i.e., leave some goods unallocated, which may be undesirable in settings such as allocation of shifts to nurses and assignment of conference submissions to reviewers. To that end, we show that for base-orderable matroid constraints, even allocations that are MNW among the set of \emph{complete} and feasible allocations are \half-EF1 and PO. Base-orderable matroids subsume the case of cardinality constraints~\citep{biswas2018fair}.  

Then, using a novel technique of constructing ``alternate worlds'', we show that the \half-EF1 and PO guarantees can be extended to MNW allocations subject to a broad class of non-matroidal constraints, which includes balancedness as a special case. We also show that certain strengthenings of EF1 and PO are unachievable in the realm of constrained allocations, but show an improvement from EF1 to \efwc for the case of goods with copies~\citep{GHLT24}. 

Finally, we expand the recent work on ``best of both worlds'' (BoBW) guarantees~\citep{AFSV24} to the realm of constrained allocations. Building on the work of \cite{echenique2021constrained}, we prove that randomized allocations that are ex ante EF and PO along with ex post \efoo, Prop1, and PO exist for the case of goods with copies, and the same result except for the \efoo guarantee holds when adding the balancedness constraint. For formal definitions, see \Cref{sec:bobw}.

\section{Preliminaries}\label{sec:prelims}
For any $r \in \bbN$, define $[r] := \set{1,2,\dots,r}$. Let $N = [n]$ be a set of \emph{agents}, and $M$ be a set of $m$ \emph{(indivisible) goods}. Each agent $i$ has a valuation function $v_i : M \to \bbR_{\ge 0}$, where $v_i(g)$ is her value for good $g$. We assume additive valuations: with slight abuse of notation, the value of agent $i$ for a set of goods $S \subseteq M$ is $v_i(S) := \sum_{g \in S} v_i(g)$. Let $\vals = (v_1,\ldots,v_n)$ be the \emph{valuation profile}. In an \emph{allocation} $A = (A_1,\ldots,A_n)$, $A_i \subseteq M$ is the bundle of goods assigned to agent $i$ and $A_i \cap A_j = \emptyset$ for all distinct $i,j \in M$. We say that $A$ is \emph{complete} if $\cup_{i \in N} A_i = M$. The utility to agent $i$ under this allocation is $v_i(A_i)$. 

\subsection{Feasibility Constraints}
We are interested in \emph{constrained} allocation problems, where we are allowed to choose an allocation only from a given set of allocations $\allocs$. One of the most general family of constraints uses matroids. 

\paragraph{Matroid constraints.} We are given a matroid over the goods $(M,\calI)$ satisfying: (a) $\emptyset \in \calI$; (b) \emph{(hereditary property)} if $S \in \calI$ and $S' \subseteq S$, then $S' \in \calI$; and (c) \emph{(exchange property)} if $S,T \in \calI$ and $|T| > |S|$, then there exists $g \in T \setminus S$ such that $S \cup \set{g} \in \calI$. Then, an allocation $A$ is called feasible (i.e., $A \in \allocs$) if $A_i \in \calI$ for all agents $i \in N$. Bases of a matroid are its maximal independent sets. The exchange property implies that all bases have the same cardinality (known as the rank of the matroid), and for any two bases $S$ and $T$, there exist $g \in S \setminus T$ and $g' \in T \setminus S$ such that $S \cup \set{g'} \setminus \set{g}$ and $T \cup \set{g} \setminus \set{g'}$ are both bases as well. We will reference the following popular families of matroids in this paper. 

\begin{itemize}
    \item \emph{Base-orderable matroids.} These are matroids with a strengthened base exchange property: for any two bases $S$ and $T$, there exists a \emph{bijection} $f: S \to T$ such that, for every $g \in S$, $S \cup \set{f(g)} \setminus \set{g}$ and $T \cup \set{g} \setminus \set{f(g)}$ are both bases as well.
    \item \emph{Laminar matroids.} This is a special case of base-orderable matroids, where we are given a laminar family of subsets of goods (termed \emph{categories}) $\calC \subseteq 2^M$. Here, laminar means that for every $C,C' \in \calC$, we have $C \cap C' \in \set{\emptyset,C,C'}$. For each category $C \in \calC$, we are also given an upper bound $h_C$. Then, $\calI$ consists of all sets $S$ such that $|S \cap C| \le h_C$ for every category $C \in \calC$. For feasibility, we assume $h_C \ge \ceil{\nicefrac{|C|}{n}}$ for each $C \in \calC$.
    \item \emph{Partition matroids.} This is a special case of laminar matroids, where $\calC$ is a partition of the set of goods $M$, i.e., $C \cap C' = \emptyset$ for all $C,C' \in \calC$ and $\cup_{C \in \calC} C = M$. This is also known in the fair division literature as \emph{cardinality constraints}~\citep{biswas2018fair}. 
    \item \emph{Goods with copies.} Introduced by \cite{GHLT24}, this is a special case of partition matroids, where $h_C = 1$ and $|C| \le n$ for each category $C \in \calC$. The typical motivation for this is when the goods in each category are copies (i.e., perfect substitutes) of each other, i.e., $v_i(g) = v_i(g')$ for all $i \in N$, $C \in \calC$, and $g,g' \in C$. This is what we will assume when referring to the case of goods with copies.

\end{itemize}

\paragraph{Non-matroidal constraints.} We also study the following constraints that do not form a matroid.
\begin{itemize}
    \item \emph{Partition matroids with lower bounds.} Here, we have a partition matroid induced by the partition $\calC$ of $M$ and upper bounds $(h_C)_{C \in \calC}$. Additionally, we also have lower bounds $(\ell_C)_{C \in \calC}$. Then, $\allocs$ consists of the set of allocations $A$ where $\ell_C \le |A_i \cap C| \le h_C$ for every $i \in N$ and $C \in \calC$. For feasibility, we assume $\ell_C \le \floor{\nicefrac{|C|}{n}} \le \ceil{\nicefrac{|C|}{n}} \le h_C$ for each $C \in \calC$. 
    \item \emph{Balanced.} This is a special case of partition matroids with lower bounds with a singleton $\calC = \set{M}$, $\ell_M = \floor{\sfrac{m}{n}}$ and $h_M = \ceil{\sfrac{m}{n}}$. That is, balanced allocations $A$ satisfy $|A_i| \in \set{\floor{\sfrac{m}{n}}, \ceil{\sfrac{m}{n}}}$ for all $i \in N$, so the number of goods allocated to any two agents differ by at most one. 
\end{itemize}

\subsection{Maximum Nash Welfare}
\cite{CKMP+19} introduced the \emph{maximum Nash welfare} rule and proved that, given any unconstrained fair division instance with additive valuations, it always returns an EF1+PO allocation (thus settling the open question of the existence of such allocations). The following is the natural adaptation of their rule to the \emph{constrained} case.

\begin{definition}[(Constrained) Maximum Nash Welfare (MNW)]
For an allocation $A$, define $P(A) = \set{i \in N: v_i(A_i) > 0}$ to be the set of agents receiving a positive utility. An allocation $A \in \allocs$ is a \emph{(constrained) maximum Nash welfare} (MNW) allocation if it maximizes the number of agents receiving positive utility, $\abs{P(A)}$, and, subject to that, maximizes the product of positive agent utilities (Nash welfare), $\NW(A) := \prod_{i \in P(A)} v_i(A_i)$.\footnote{Note that in the typical case where there exists an allocation yielding a positive utility to every agent, this simplifies to saying that $A$ is an MNW allocation if it maximizes $\prod_{i \in N} v_i(A_i)$.}  
\end{definition}

When the above definition is applied to the set of \emph{complete and feasible} allocations, we refer to the resulting allocations as \emph{complete MNW allocations}. Note that a complete MNW allocation may not necessarily be an MNW allocation (among the set of all feasible allocations); see \Cref{ex:mnw-incomplete}. 

\subsection{Fairness and Efficiency Desiderata} 
We are interested in allocations that satisfy various desiderata. The two that play a key role are the following.

\begin{definition}[Approximate envy-freeness up to one good (EF1)]
    For $\alpha \in [0,1]$, an allocation $A \in \allocs$ is $\alpha$-approximate \emph{envy-free up to one good} ($\alpha$-EF1) if, for every pair of agents $i,j \in N$, either $A_j = \emptyset$ or $v_i(A_i) \ge \alpha \cdot v_i(A_j \setminus \set{g})$ for some $g \in A_j$. In words, agent $i$ should not envy agent $j$, up to a factor of $\alpha$, after excluding some good from agent $j$'s bundle. When $\alpha=1$, we simply call it an EF1 allocation. 
\end{definition}

\begin{definition}[Approximate (constrained) Pareto optimality (PO)]
    For $\alpha \in [0,1]$, an allocation $A \in \allocs$ is $\alpha$-approximate \emph{(constrained) Pareto optimal} ($\alpha$-PO) if there is no other allocation $B \in \allocs$ such that $v_i(B_i) \ge \alpha \cdot v_i(A_i)$ for each $i \in N$ and at least one inequality is strict. In words, no other feasible allocation should be able to make an agent happier without making any agent less happy. When $\alpha=1$, we simply call it a PO allocation.  
\end{definition}

We also study relaxations and strengthenings of these two desiderata, but we define them in their respective sections.

\section{The Unreasonable Fairness of (Constrained) Maximum Nash Welfare}\label{sec:nash}

A common technique to deal with feasibility constraints is to reduce the problem to an unconstrained instance by modifying the valuation of each agent for any bundle $S$ to be her highest value for any feasible subset $T \subseteq S$, computing a desirable allocation $A$ for the unconstrained instance, and then giving to each agent $i$ her most valuable feasible subset of $A_i$. \cite{biswas2018fair} observed that such a reduction for partition matroid constraints induces submodular valuations.\footnote{Valuation $v$ is submodular if $v(S \cup T) + v(S \cap T) \le v(S)+v(T)$ for all $S,T \subseteq M$.} However, this actually holds true for any matroid constraint due to known results from matroid theory. In addition to this technique, we use a result by \cite{CKMP+19} that every MNW allocation for an unconstrained instance with submodular valuations satisfies a relaxation of EF1 that they term \emph{marginal envy-freeness up to one good} (MEF1), and observe that MEF1 implies \half-EF1. 

\begin{restatable}{theorem}{matroid}\label{thm:matroid}
Under any matroid constraint, every MNW allocation is \half-EF1 and PO.
\end{restatable}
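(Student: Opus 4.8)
The plan is to reduce the constrained instance to an \emph{unconstrained} instance with submodular valuations, where the \half-EF1 guarantee for MNW is already available from \cite{CKMP+19}. For each agent $i$, define the modified valuation $\tilde v_i(S) := \max\set{v_i(T) : T \subseteq S,\, T \in \calI}$, i.e., $i$'s value for the best feasible subset of $S$. Two structural facts drive everything. First, $\tilde v_i$ is monotone and submodular: it is exactly the weighted rank function of the matroid $(M,\calI)$ with weights $v_i(\cdot)$, which can be written as $\tilde v_i(S) = \int_0^\infty r\paren{S \cap \set{g : v_i(g) \ge t}}\,dt$ where $r$ is the (submodular) rank function, and a nonnegative combination of restrictions of a submodular function is submodular. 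This is the step that extends the reduction of \cite{biswas2018fair} from partition matroids to \emph{arbitrary} matroids. Second, $v_i$ and $\tilde v_i$ agree on every independent set and, by the hereditary property, on every subset of an independent set; in particular $\tilde v_i(A_i) = v_i(A_i)$ and $\tilde v_i(A_j \setminus \set{g}) = v_i(A_j \setminus \set{g})$ whenever $A_i, A_j \in \calI$ and $g \in A_j$.

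Next I would show that a constrained MNW allocation $A$ (for $v$) is an unconstrained MNW allocation for the profile $(\tilde v_i)_i$. For the easy direction, any feasible $A$ has $v_i(A_i) = \tilde v_i(A_i)$ for all $i$, so its Nash welfare is unchanged when reinterpreted in the $\tilde v$-instance, and $P(A)$ is the same under both profiles. Conversely, any unconstrained allocation $B$ for $\tilde v$ can be ``projected'' to a feasible allocation $B'$ by letting $B_i'$ be a value-maximizing feasible subset of $B_i$; then $B'$ is feasible, its bundles remain disjoint, and $v_i(B_i') = \tilde v_i(B_i)$ for all $i$. Hence the optimal (lexicographic) Nash welfare value coincides across the two instances, and $A$ is an unconstrained MNW allocation for $\tilde v$.

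Now I would invoke the result of \cite{CKMP+19}: for monotone submodular valuations, every unconstrained MNW allocation is MEF1, and MEF1 implies \half-EF1. Concretely, MEF1 gives, for every ordered pair $i,j$, a good $g \in A_j$ with $\tilde v_i\paren{A_i \cup (A_j \setminus \set{g})} \le 2\,\tilde v_i(A_i)$; since $\tilde v_i$ is monotone, $\tilde v_i(A_j \setminus \set{g}) \le 2\,\tilde v_i(A_i)$, i.e., $A$ is \half-EF1 in the $\tilde v$-instance. To transfer this to the original valuations, I use the agreement fact from the first paragraph: both $A_i$ and $A_j \setminus \set{g}$ are independent sets, so the inequality reads $v_i(A_i) \ge \half\,v_i(A_j \setminus \set{g})$. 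As this holds for every pair $i,j$, the allocation $A$ is \half-EF1 with respect to $v$.

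Finally, Pareto optimality is immediate from the definition of MNW and does not even require the reduction: if some feasible $B \in \allocs$ Pareto-dominated $A$, then $P(B) \supseteq P(A)$, and either $\abs{P(B)} > \abs{P(A)}$ or $\abs{P(B)} = \abs{P(A)}$ with $\NW(B) > \NW(A)$ (the strictly improved agent already lies in $P(A)$), contradicting that $A$ lexicographically maximizes over $\allocs$. I expect the main obstacle to be the transfer step: MEF1 is naturally an inequality about $\tilde v$ evaluated on the possibly-\emph{infeasible} union $A_i \cup (A_j \setminus \set{g})$, and the factor of \half arises precisely from bounding the value of this union by $2\,\tilde v_i(A_i)$ and then discarding $A_i$ via monotonicity; the argument closes cleanly only because feasible bundles (and their subsets) are exactly where $v$ and $\tilde v$ coincide, letting us pass back to the original additive valuations without loss.
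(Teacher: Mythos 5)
Your proposal is correct and follows essentially the same route as the paper: reduce to an unconstrained instance via the weighted rank function $\tilde v_i$, show the constrained MNW allocation remains MNW there, invoke the MEF1 guarantee of \cite{CKMP+19} for submodular valuations, and transfer back using the agreement of $v_i$ and $\tilde v_i$ on independent sets. The only (immaterial) deviations are that you prove submodularity of the weighted rank function directly rather than citing it, and you argue PO directly from the lexicographic definition of constrained MNW instead of via PO of the unconstrained MNW allocation.
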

\begin{proof}
For this, we can reduce the constrained problem with additive valuations to an unconstrained problem with non-additive valuations using a popular trick (see, e.g., the works of \cite{biswas2018fair} and \citet{dror2023fair}). Consider an instance with additive valuations $v = (v_1,\ldots,v_n)$ and matroid constraint $(M,\calI)$. In the new unconstrained instance, the valuation function of agent $i$ is given by $\tv_i(S) = \max_{T \subseteq S : T \in \calI} v_i(T)$; that is, each agent values $S$ by the best feasible bundle within $S$. This construction is known as a \emph{weighted rank function} of a matroid with $v_i(g)$ being the weight of element $g$, and it is known that any weighted rank function of any matroid is monotone submodular~\citep{schrijver2003combinatorial}. 

Next, consider any MNW allocation $A$ in the constrained instance. We consider $A$ as an allocation in the unconstrained instance. Next, we show that $A$ is also an MNW allocation in the unconstrained instance. Suppose this is not true. Then, there exists an MNW allocation $\tilde{B}$ in the unconstrained instance such that one of two things happens:

\begin{enumerate}[(1)]
\item $\begin{aligned}
|P(\tilde{B})| 
= |\set{i \in N : \tv_i(\tilde{B}_i) > 0}|
> |\set{i \in N : \tv_i(A_i) > 0}| = |P(A)|
\end{aligned}$
\item $P(\tilde{B}) = P(A) = P$ and $\prod_{i \in P} \tv_i(\tilde{B}_i) > \prod_{i \in P} \tv_i(A_i)$.
\end{enumerate}

Generate an allocation $B$ in the constrained instance by giving to each agent $i$ her most valuable feasible subset of $\tilde{B}_i$. Notice that $\tv_i(A_i) = v_i(A_i)$ and $\tv_i(\tilde{B}_i) = v_i(B_i)$ for each $i \in N$. Now, consider allocations $A$ and $B$ in the constrained instance. In the first case above, we would get $|\set{i \in N : v_i(B_i) > 0}| > |\set{i \in N : v_i(A_i) > 0}|$. And in the second case above, we would get $\set{i \in N : v_i(B_i) > 0} = \set{i \in N : v_i(A_i) > 0} = P$ (say) and $\prod_{i \in P} v_i(B_i) > \prod_{i \in P} v_i(A_i)$. Both would violate the fact that $A$ is an MNW allocation in the constrained instance.

Finally, \cite{CKMP+19} show that every MNW allocation in an unconstrained instance with submodular valuations $\tv = (\tv_1,\ldots,\tv_n)$, including allocation $A$ above, satisfies marginal envy-freeness up to one good (MEF1): for every pair of agents $i,j \in N$, either $A_j = \emptyset$ or there exists $g \in A_j$ such that $\tv_i(A_i \cup A_j \setminus \set{g}) - \tv_i(A_i) \le \tv_i(A_i)$. For monotone submodular valuations, it is easy to see that MEF1 implies \half-EF1: $\tv_i(A_j \setminus \set{g}) \le 2 \tv_i(A_i)$. Noticing that $\tv_i(A_i) = v_i(A_i)$ and $\tv_i(A_j) = v_i(A_j)$ yields that $A$ is \half-EF1 in the constrained instance too.

The fact that $A$ is PO is established similarly. If there exists an allocation $B$ of the constrained instance that Pareto dominates $A$, then, even in the unconstrained instance, $B$ would Pareto dominate $A$, contradicting the fact that $A$, being an MNW allocation in the unconstrained instance, is PO~\citep{CKMP+19}. 
\end{proof}

An MNW allocation from \Cref{thm:matroid} may be incomplete. At first, one may worry that such an allocation can be highly inefficient. However, note that not even a complete allocation can Pareto dominate an MNW allocation, nor have a higher Nash welfare. As a consequence, every agent $i$ must have zero value for every unallocated good $g$ that can be feasibly added to $A_i$ (otherwise adding $g$ to $A_i$ would yield a Pareto improvement), and no agent $i$ can envy any feasible bundle from the unallocated goods (otherwise swapping such a bundle with $A_i$ would yield a Pareto improvement).

Nonetheless, as motivated in \Cref{sec:introduction}, there are settings in which complete allocations are desirable because free disposal of goods may not be allowed. \emph{Does every matroid constrained instance admit a complete allocation that is \half-EF1 and PO?} The only reason \Cref{thm:matroid} does not already answer this positively is that there may exist an instance in which \emph{every} MNW allocation is incomplete. The following example shows that such instances indeed exist!

\begin{example}\label{ex:mnw-incomplete}
    Consider an instance with $2$ agents, $8$ goods ($g_1$ through $g_8$), and the following laminar matroid constraint:
    \begin{itemize}[left=0pt]
        \item Category $C_1 = \set{g_1,g_2,g_3,g_4}$ has upper bound $h_{C_1} = 2$.
        \item Category $C_2 = \set{g_1,g_2,g_3,g_4,g_5,g_6,g_7,g_8}$ has upper bound $h_{C_2} = 4$.
    \end{itemize}
    The valuations of the agents are as follows.
    \begin{center}
    \renewcommand{\arraystretch}{1.2}
    \begin{tabular}{c | c c c c c c c c}
        \toprule
               & $g_1$ & $g_2$ & $g_3$ & $g_4$ & $g_5$ & $g_6$ & $g_7$ & $g_8$ \\
        \midrule
         Agent $1$ & 0 & 1 & 0 & 0 & 1 & 1 & 1 & 0 \\
         Agent $2$ & 0 & 0 & 1 & 1 & 0 & 0 & 0 & 1 \\
         \bottomrule
    \end{tabular}
    \end{center}

    Note that there is a unique MNW allocation:
    \[A_1 = \set{g_2,g_5,g_6,g_7}, A_2 = \set{g_3,g_4,g_8}.\]
\end{example}

In the above example, notice that every \emph{complete} allocation is Pareto dominated by the incomplete MNW allocation. A similar result is observed by \citet{BCIZ21}, where they show that for binary submodular valuations, some instances have no allocation that maximizes utilitarian welfare that are both complete and ``clean'' (where clean means that no agent's bundle includes a good which that agent has $0$ marginal utility for). Under the paradigm of reducing instances under matroid constraints to submodular valuations, one can easily see that in order for the allocation in the original instance to be complete, the allocation over the reduced submodular instance must be complete and clean (with respect to all goods that do not have $0$ additive value in the original instance). 

This is why completeness should only be sought if free disposal is indeed impossible. Also, due to the above example, in what follows, Pareto optimality of complete allocations would mean only that they are not Pareto dominated by other \emph{complete} and feasible allocations. 

We now show that at least for the broad family of \emph{base-orderable} matroid constraints, \emph{complete MNW allocations} --- recall that these are MNW allocations among the set of complete and feasible allocations --- are also \half-EF1 and PO. 

\begin{theorem}\label{thm:base-orderable}
Under any base-orderable matroid constraints, every complete MNW allocation is \half-EF1 and PO.	
\end{theorem}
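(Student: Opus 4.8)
The plan is to handle the two guarantees separately, with Pareto optimality being immediate and \half-EF1 requiring the real work. For PO, recall that here PO is only with respect to other \emph{complete} feasible allocations. If some complete feasible $B$ Pareto dominated the complete MNW allocation $A$, then every agent in $P(A)$ would remain in $P(B)$ (a weakly larger value stays positive), so $P(A) \subseteq P(B)$; if $|P(B)| > |P(A)|$ this contradicts the first MNW criterion, and if $P(B) = P(A) = P$ then $\prod_{i \in P} v_i(B_i)$ would strictly exceed $\prod_{i \in P} v_i(A_i)$ (one factor strictly up, none down, all positive), contradicting the second. Hence $A$ is PO.

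For \half-EF1, I would argue by contradiction. Fix agents $i,j$ with $A_j \ne \emptyset$ and $v_i(A_i) > 0$ (the zero-value case is dispatched via the $|P(\cdot)|$-maximization, analogously to \Cref{thm:matroid}), let $g^\star = \argmax_{g \in A_j} v_i(g)$, and suppose $v_i(A_j \setminus \set{g^\star}) > 2\,v_i(A_i)$; the goal is to exhibit a \emph{complete} feasible allocation of strictly larger Nash welfare. The difficulty, relative to the unconstrained argument of \cite{CKMP+19}, is that completeness forbids simply moving a good from $A_j$ to $A_i$: the bundle $A_i \cup \set{g}$ may be infeasible, and even when it is feasible we must keep every good allocated.

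The key step is an exchange lemma extracted from base-orderability (which, crucially, is preserved under restriction, so it applies to the matroid restricted to $A_i \cup A_j$). Extending $A_i$ and $A_j$ to bases of this restriction and invoking the base-orderable bijection between them, I would produce an injection $\phi$ from $A_j$ into $A_i \cup \set{\bot}$ such that, for every $g \in A_j$: if $\phi(g) = \bot$ then $A_i \cup \set{g} \in \calI$, and otherwise both $A_i \cup \set{g} \setminus \set{\phi(g)} \in \calI$ and $A_j \cup \set{\phi(g)} \setminus \set{g} \in \calI$. Each $g$ thus yields a modification that is still complete and feasible: a pure \emph{transfer} of $g$ to $i$ when $\phi(g)=\bot$, and otherwise a one-for-one \emph{swap} of $g$ for $\phi(g)$ between $i$ and $j$. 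Establishing this lemma cleanly --- in particular the injectivity of $\phi$ and the simultaneous feasibility of both sides of each swap, while correctly absorbing the size difference $|A_j| - |A_i|$ through the base extension --- is the main obstacle.

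Given these moves, since $A$ is MNW among complete feasible allocations, none of them can raise the Nash welfare, which for each $g$ yields the inequality $(v_i(A_i) + v_i(g) - v_i(\phi(g)))(v_j(A_j) - v_j(g) + v_j(\phi(g))) \le v_i(A_i)\,v_j(A_j)$ (reading $v_i(\bot) = v_j(\bot) = 0$). I would then combine these inequalities across $g \in A_j$. The factor of two enters precisely here: because $\phi$ is injective into $A_i$, the total value agent $i$ cedes back over all swaps satisfies $\sum_{g} v_i(\phi(g)) \le v_i(A_i)$, so the goods on which $i$ fails to gain contribute at most $v_i(A_i)$ to $v_i(A_j)$, while the remaining goods are controlled, as in \cite{CKMP+19}, through the product inequalities on $j$'s side (a strict loss to $j$ is forced whenever $i$ would strictly gain). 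Carrying out this combination --- isolating $g^\star$ and showing that the standing assumption $v_i(A_j \setminus \set{g^\star}) > 2\,v_i(A_i)$ forces at least one transfer or swap to strictly increase the Nash welfare --- delivers the contradiction, and hence \half-EF1.
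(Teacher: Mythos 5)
Your proposal is correct and its core argument is the same as the paper's: assume a \half-EF1 violation between $i$ and $j$, use base-orderability to pair goods of $A_j$ with goods of $A_i$ so that each pair admits a completeness-preserving feasible swap, observe that any swap strictly benefiting $i$ must strictly hurt $j$ (else it is a Pareto improvement over a complete MNW allocation), and then run the \citet{CKMP+19}-style summation/ratio argument to find one swap that strictly raises the Nash welfare. The only real divergence is the device used to align $A_i$ and $A_j$, which is exactly the step you flag as the main obstacle. The paper resolves it by a preprocessing step (the free-extension construction of \citet{dror2023fair}, detailed in \Cref{app:preprocessing}) that pads the instance with zero-valued dummy goods so that every bundle of a complete allocation is a basis, after which the base-orderable bijection $f:A_j\to A_i$ applies verbatim. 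You instead restrict the matroid to $A_i\cup A_j$, extend both bundles to bases of the restriction, and read off an injection $\phi$ into $A_i$ plus pure transfers for the goods matched to elements of $A_j$ itself; this is sound (base-orderability is preserved under restriction, pure transfers preserve completeness and feasibility, and the key inequality $\sum_{g\in A_j}v_i(\phi(g))\le v_i(A_i)$ survives with $\phi$ merely injective into $A_i$ rather than bijective), and the factor of $2$ comes out identically. Your version trades the paper's appendix on dummy-good equivalence for a minor-closure fact and some bookkeeping around the $\bot$-cases; neither route is materially simpler, but both are valid.
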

\begin{proof}
Consider any instance with additive valuations $v = (v_1,\ldots,v_n)$ and a base-orderable matroid constraint $(M,\calI)$. Let $A$ be any complete MNW allocation. 

To see that $A$ is PO, suppose for contradiction that a complete allocation $B$ Pareto dominates $A$. Then, $B$ must either give a positive utility to strictly more agents ($|P(B)| > |P(A)|$) or give a positive utility to the same set of agents ($P(B) = P(A)$) while increasing their product of utilities ($\NW(B) > \NW(A)$). Both possibilities contradict the fact that $A$ is a complete MNW allocation. 

Next, we show that $A$ is \half-EF1. Suppose for contradiction that there exist agents $i,j \in N$ such that $A_j \neq \emptyset$ and $2\cdot v_i(A_i) < v_i(A_j) - v_i(g)$ for all $g \in A_j$. 

Due to Observation $2$ from \cite{dror2023fair}, we can assume without loss of generality that $A_i$ is a basis of the matroid $(M,\calI)$ for each $i \in N$. This is achieved by performing a preprocessing step of adding dummy goods to the instance. The details of this step along with a proof that complete MNW allocations in the original instance and in the preprocessed instance are equivalent is provided in \Cref{app:preprocessing}.

Now, since $A_i$ and $A_j$ are both bases, let $f: A_j \to A_i$ be the bijection from the definition of base-orderability of $(M,\calI)$. Define $A^*_j := \set{g \in A_j \mid v_i(g) > v_i(f(g))}$ to be the set of all goods $g \in A_j$ such that agent $i$'s utility would increase if $g$ and $f(g)$ were swapped between $A_i$ and $A_j$. Define its projection $A^*_i := \set{f(g) \mid g \in A^*_j}$. Then:
\begin{align}
&\sum\nolimits_{g \in A^*_j}{(v_i(g) - v_i(f(g)))} \nonumber\\
&\quad = v_i(A^*_j) - v_i(A^*_i) \ge v_i(A_j) - v_i(A_i)\nonumber\\
&\quad > v_i(A_i)+\max\nolimits_{g' \in A^*_j} {(v_i(g')-v_i(f(g')))},\label{eq:base-orderable-sum-i}
\end{align}
where the second transition follows because 
\begin{align*}
v_i(A_j)-v_i(A_i)  = (v_i(A^*_j) - v_i(A^*_i)) + (v_i(A_j \setminus A^*_j) - v_i(A_i \setminus A^*_i)),
\end{align*}
and 
\[
v_i(A_j \setminus A^*_j) - v_i(A_i \setminus A^*_i) = \sum\nolimits_{g \in A_j \setminus A^*_j} v_i(g)-v_i(f(g)) \le 0;
\]
and the third transition follows due to the assumed violation of \half-EF1. 

Now, we can observe that
\begin{align}
    \sum_{g \in A^*_j}{(v_j(g) - v_j(f(g)))} = v_j(A^*_j) - v_j(A^*_i) \le v_j(A^*_j) \le v_j(A_j).\label{eq:base-orderable-sum-j}
\end{align}

Note that for every $g \in A^*_j$, not only is $v_i(g) > v_i(f(g))$ (due to the definition of $A^*_j$), but also $v_j(g) > v_j(f(g))$. If this were not true, then swapping $g$ and $f(g)$ between $A_i$ and $A_j$, which yields a complete and feasible allocation due to $A_i \cup \set{g^*} \setminus \set{f(g^*)}$ and $A_j \setminus \set{g^*} \cup \set{f(g^*)}$ also being bases of $(M,\calI)$, would be a Pareto improvement over $A$. Choosing $g^* \in \argmin_{g \in A^*_j} \frac{v_j(g) - v_j(f(g))}{v_i(g) - v_i(f(g))}$, we have
\begin{align*}
    \frac{v_j(A_j)}{v_i(A_i) + v_i(g^*) - v_i(f(g^*))}
    & > \frac{\sum_{g \in A^*_j}{(v_j(g) - v_j(f(g)))}}{\sum_{g \in A^*_j}{(v_i(g) - v_i(f(g)))}} \\
    & \geq \frac{v_j(g^*) - v_j(f(g^*))}{v_i(g^*) - v_i(f(g^*))} \\
    & = \frac{v_j(A_j) - v_j(A_j \setminus \set{g^*} \cup \set{f(g^*)})}{v_i(A_i \cup \set{g^*} \setminus \set{f(g^*)}) - v_i(A_i)},
\end{align*}
which can be rearranged to get:
\begin{align*}
    v_i\left(A_i \cup \set{g^*} \setminus \set{f(g^*)}\right) \cdot v_j\left(A_j \setminus \set{g^*} \cup \set{f(g^*)}\right) > v_i(A_i) \cdot v_j(A_j).
\end{align*}
Consider allocation $B$ where $B_i = A_i \cup \set{g^*} \setminus \set{f(g^*)}$, $B_j = A_j \setminus \set{g^*} \cup \set{f(g^*)}$, and $B_k = A_k$ for $k \in N\setminus\set{i,j}$. Note that $B$ is also a complete allocation. If either $i$ or $j$ had zero utility under $A$, then $B$ gives a positive utility to strictly more agents ($|P(B)| > |P(A)|$), and if both had a positive utility under $A$, then $B$ has a strictly higher product of positive agent utilities ($\NW(B) > \NW(A)$). In either case, it contradicts $A$ being a complete MNW allocation. This proves that $A$ must be \half-EF1 too. 
\end{proof}

\subsection{Beyond Matroidal Constraints}\label{sec:lower-bounds}

One may be tempted to apply the trick from \Cref{thm:matroid} of reducing to unconstrained instances for even non-matroidal constraints. Even if this induces valuations from the broader class of subadditive valuations,\footnote{Valuation $v$ is subadditive if $v(S\cup T) \le v(S)+v(T)$ for all $S,T \subseteq M$.} one can use a recent result by \cite{BS24} to establish the existence of EF1 and \half-PO allocations. However, even for very simple non-matroidal constraints, it is easy to see that the induced valuations would not even be subadditive.

For example, consider partition matroids with lower bounds, where the number of goods allocated to every agent from each category $C$ must be in between a lower bound $\ell_C$ and an upper bound $h_C$ for category $C$. Now, an agent's value for a bundle of goods can stay zero while the bundle contains fewer than $\ell_C$ goods from any category $C$, and suddenly jump to a positive value once it contains at least $\ell_C$ goods from every category $C$, violating subadditivity. 

Nonetheless, for the case of partition matroids with lower bounds, we develop a novel approach of \emph{alternate worlds construction} to establish that complete MNW allocations are \half-EF1 and PO. The following lemma lays out the approach. 

\begin{restatable}{lemma}{nashaltworlds}\label{lem:nashaltworlds}
    Let $\set{\allocs_1, \allocs_2, \dots, \allocs_k}$ be a collections of sets of feasible allocations (``alternate worlds''), defined over the same set of goods $M$. Fix any $\alpha \in [0,1]$. If every MNW allocation among $\allocs_t$ is $\alpha$-EF1 and PO for each $t \in [k]$, then every MNW allocation among $\allocs = \cup_{t \in [k]}{\allocs_t}$ will be $\alpha$-EF1 and PO.
\end{restatable}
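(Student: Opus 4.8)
The plan is to lean on two asymmetric facts about the two desiderata: the $\alpha$-EF1 property of an allocation is \emph{intrinsic}, in that its definition quantifies only over pairs of agents and the goods in their bundles and never references the ambient feasible set, whereas PO is defined \emph{relative} to the feasible set. The former means $\alpha$-EF1 transfers freely between worlds, while the latter means we must be careful never to ``import'' PO from a subworld. The MNW objective, being the lexicographic maximization of $(\abs{P(\cdot)}, \NW(\cdot))$, is the bridge that ties everything together.

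First I would fix an arbitrary MNW allocation $A$ among $\allocs = \cup_{t \in [k]} \allocs_t$. Since $A \in \allocs$, it lies in at least one world; call it $\allocs_{t^*}$. The key claim is that $A$ is then also an MNW allocation among $\allocs_{t^*}$ on its own: because $A$ lexicographically maximizes $(\abs{P(\cdot)}, \NW(\cdot))$ over the \emph{larger} set $\allocs$, and every allocation in $\allocs_{t^*}$ belongs to $\allocs$, no allocation of $\allocs_{t^*}$ can beat $A$ on this objective, so $A$ is lex-optimal within $\allocs_{t^*}$ as well. Invoking the hypothesis for world $t^*$, $A$ is $\alpha$-EF1; and since $\alpha$-EF1 depends only on $A$ and the valuations, this property holds verbatim for $A$ viewed as an allocation in $\allocs$.

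For PO I would deliberately \emph{not} attempt to lift the per-world guarantee, since PO relative to $\allocs_{t^*}$ is strictly weaker than PO relative to $\allocs$ (a dominating allocation could live in a different world). Instead I would note that any MNW allocation is automatically PO with respect to the very set it maximizes over: if some $B \in \allocs$ Pareto dominated $A$, then either $B$ would give positive utility to a strict superset of $P(A)$, contradicting maximality of $\abs{P(\cdot)}$, or it would keep $P(A)$ fixed while strictly increasing $\prod_{i \in P(A)} v_i(\cdot)$, contradicting Nash welfare maximization. Thus $A$ is PO over $\allocs$ as a free consequence of its MNW optimality, using no part of the hypothesis.

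The only conceptual obstacle is the tempting but doomed idea of transferring PO world-by-world; the resolution is that PO need not be transferred at all. All the genuine work is done by the subset argument for the lexicographic objective together with the feasible-set-independence of $\alpha$-EF1, both of which are immediate, so I expect no technical difficulty beyond stating these observations carefully.
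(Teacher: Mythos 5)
Your proposal is correct and follows essentially the same route as the paper: you show $A$ is MNW within whichever world $\allocs_{t^*}$ contains it (the paper phrases this as a contradiction, you phrase it directly via the subset relation, but the content is identical), import $\alpha$-EF1 from the hypothesis since it is feasibility-set-independent, and derive PO over $\allocs$ directly from the lexicographic optimality of MNW rather than from the per-world guarantee. No gaps.
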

\begin{proof}
    Consider any allocation $A$ that is MNW among $\allocs = \cup_{t \in [k]}{\allocs_t}$. The fact that it is PO among $\allocs$ follows through the same argument as for the unconstrained case (see the proof of \Cref{thm:base-orderable}). The fact that $A$ is $\alpha$-EF1 follows from the fact that $A$ must be an MNW allocation among $\allocs_t$ for some $t \in [k]$ (i.e., in one of the alternate worlds). For contradiction, assume that this is false. Since $\allocs = \cup_{t \in [k]}{\allocs_t}$, we must have $A \in \allocs_t$ for some $t \in [k]$. Then, the fact that $A$ is not an MNW allocation among $\allocs_t$ implies that there exists another allocation $B \in \allocs_t$ which either gives positive utility to strictly more agents or gives positive utility to the same set of agents as $A$ but has a higher product of their utilities. Since $B \in \allocs$, this would also contradict $A$ being an MNW allocation among $\allocs$. 
\end{proof}

We can apply \Cref{lem:nashaltworlds} to \Cref{thm:matroid,thm:base-orderable} to obtain \half-EF1 and PO for a broad class of non-matroidal constraints obtained as ``unions of matroid constraints''. Note that, usually, the union of matroids $(M,\calI_1)$ and $(M,\calI_2)$ is defined as $(M,\calI)$, where $\calI = \set{I_1 \cup I_2 \mid I_1 \in \calI_1, I_2 \in \calI_2}$, which is always a matroid~\citep{oxley2011matroid}. The union we are referencing is different: if $\allocs_1$ and $\allocs_2$ are the sets of feasible allocations induced by matroid constraints $(M,\calI_1)$ and $(M,\calI_2)$ respectively, then the ``union'' of these constraints would have the set of feasible allocations $\allocs = \allocs_1 \cup \allocs_2$, which may not be induced by any matroid constraint. 
\begin{corollary}\label{cor:union-matroids}
    Given a set of feasible allocations $\allocs = \cup_{t=1}^k \allocs_k$, where $\allocs_t$ is the set of feasible allocations under a matroid constraint for each $t \in [k]$, every MNW allocation from $\allocs$ is \half-EF1 and PO. If $\allocs_t$ is the set of complete and feasible allocations under a base-orderable matroid constraint for each $t \in [k]$, then every complete MNW allocation from $\allocs$ is \half-EF1 and PO. 
\end{corollary}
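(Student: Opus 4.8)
The plan is to obtain both parts of the corollary as immediate compositions of the alternate worlds lemma (\Cref{lem:nashaltworlds}) with the two main theorems, instantiating $\alpha = \half$ in each case. No new envy or efficiency reasoning is required beyond what those three statements already provide; the work is entirely in checking that their hypotheses line up.

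For the first claim, I would observe that $\allocs = \cup_{t=1}^k \allocs_t$ is exactly the union treated by \Cref{lem:nashaltworlds}, with each $\allocs_t$ the set of feasible allocations induced by a matroid constraint $(M,\calI_t)$ over the common good set $M$. By \Cref{thm:matroid}, every MNW allocation among $\allocs_t$ is \half-EF1 and PO, which is precisely the hypothesis of the lemma with $\alpha = \half$. The lemma then yields directly that every MNW allocation among $\allocs$ is \half-EF1 and PO.

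For the second claim the structure is identical, but each $\allocs_t$ is taken to be the set of \emph{complete} and feasible allocations under a base-orderable matroid constraint. The one point requiring care is the matching of definitions: because $\allocs_t$ contains only complete allocations, an ``MNW allocation among $\allocs_t$'' in the sense of \Cref{lem:nashaltworlds} is exactly a \emph{complete MNW allocation} for that base-orderable matroid in the sense of \Cref{thm:base-orderable}. That theorem certifies each such allocation is \half-EF1 and PO, so the lemma's hypothesis holds. Since every member of $\allocs = \cup_t \allocs_t$ is then complete, an MNW allocation among $\allocs$ coincides with a complete MNW allocation from $\allocs$, and the lemma delivers the guarantee for it.

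I do not expect a substantive obstacle, as the result is a bookkeeping corollary. The only place to remain vigilant is the definitional alignment in the second part, namely that restricting $\allocs_t$ to complete allocations identifies ``MNW among $\allocs_t$'' with ``complete MNW under the base-orderable matroid,'' and that completeness is inherited by the union so the target object is genuinely a complete MNW allocation. Both the PO and the \half-EF1 conclusions then transfer through \Cref{lem:nashaltworlds} verbatim.
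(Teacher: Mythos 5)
Your proposal is correct and matches the paper's treatment: the corollary is obtained exactly by instantiating \Cref{lem:nashaltworlds} with $\alpha = \half$ and feeding in \Cref{thm:matroid} for the first part and \Cref{thm:base-orderable} for the second, with the same observation that when each $\allocs_t$ contains only complete allocations, ``MNW among $\allocs_t$'' coincides with ``complete MNW'' under that base-orderable matroid. The paper gives no separate proof beyond this composition, so your bookkeeping is precisely its argument.
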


Next, we show that for \emph{complete} allocations, the non-matroidal constraint imposed by a partition matroid with lower bounds can be represented as a union of partition matroid constraints (which are base-orderable). 

\begin{restatable}{lemma}{lowerboundaltworlds}\label{lem:lowerboundaltworlds}
    Let $\allocs$ be the set of complete and feasible allocations under a partition matroid constraint with lower bounds. Then, $\allocs = \cup_{t=1}^k \allocs_t$, where $\allocs_t$ is the set of complete and feasible allocations under a partition matroid constraint for each $t \in [k]$. 
\end{restatable}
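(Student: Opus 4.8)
The plan is to realize each lower bound as a combination of unit upper bounds on carefully chosen blocks, exploiting completeness. The crucial observation is a pigeonhole fact: if $B$ is a category of size exactly $n$ carrying upper bound $1$, then in any \emph{complete} allocation every agent receives \emph{exactly one} good of $B$ (all $n$ goods of $B$ must be assigned and no agent may take two). Thus a size-$n$ block with unit bound silently enforces a per-agent count of exactly $1$ on that block, which is the mechanism for converting lower bounds into upper bounds.

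First I would treat each category separately, since the categories of a partition matroid are disjoint, the constraint factorizes across them, and a complete allocation is automatically complete within each category. Fix a category $C$ with $|C| = c$, lower bound $\ell := \ell_C$ and upper bound $h := h_C$; feasibility gives $n\ell \le c \le nh$. A \emph{world} for $C$ is obtained by choosing an arbitrary partition of $C$ into $\ell$ blocks $B_1, \dots, B_\ell$ of size exactly $n$ together with a remainder block $B_0$ of size $c - n\ell \ge 0$; the associated partition matroid assigns upper bound $1$ to each $B_j$ with $j \in [\ell]$ and upper bound $h - \ell$ to $B_0$. I would check that these are valid partition-matroid bounds, i.e. each is at least $\ceil{|\cdot|/n}$ of its own block: $\ceil{n/n} = 1$ and $\ceil{(c - n\ell)/n} = \ceil{c/n} - \ell \le h - \ell$. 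A world for the whole instance is then a simultaneous choice of such a decomposition for every category; since each category admits finitely many decompositions, this produces a finite family $\set{\allocs_t}$ of partition-matroid constraints, and the overall partition of $M$ is the disjoint union of all per-category blocks.

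Next I would verify the two inclusions. For soundness, $\allocs_t \subseteq \allocs$: in any complete allocation feasible in a world, the pigeonhole observation forces each agent to take exactly one good from each $B_j$, hence exactly $\ell$ goods from the blocks, plus at most $h - \ell$ from $B_0$; so its count in $C$ lies in $[\ell, h]$, exactly as required by the lower-bounded partition matroid. For coverage, $\allocs \subseteq \cup_{t} \allocs_t$: given any $A \in \allocs$, I would build, category by category, a decomposition in which $A$ is feasible. Because $|A_i \cap C| \ge \ell$ for every agent $i$, I perform $\ell$ rounds of round-robin donation: in round $j$ each agent contributes one of its as-yet-unused goods of $C$ to block $B_j$. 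Each $B_j$ then holds exactly one good per agent (so $A$ respects the unit bound) and has size $n$; the leftover goods form $B_0$, in which agent $i$ holds $|A_i \cap C| - \ell \le h - \ell$ goods, respecting the remainder bound. Carrying this out in every category yields a single world containing $A$.

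The main obstacle is calibrating the block construction so that it is simultaneously \emph{sound} and \emph{covering}: the unit-bound blocks must be exactly of size $n$ so that completeness pushes every agent's share up to the lower bound, while the remainder bound $h-\ell$ must be tuned to reproduce the global upper bound $h$ without over-restricting. The genuine combinatorial content sits in the coverage direction, where one must argue that every valid allocation decomposes into $\ell$ transversal blocks plus a remainder; this follows from the fact that each agent owns at least $\ell$ goods of the category and can therefore donate one to each block. Assembling the per-category decompositions into one partition matroid over $M$ and invoking finiteness of the resulting family then gives $\allocs = \cup_{t=1}^{k} \allocs_t$.
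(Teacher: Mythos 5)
Your proof is correct and follows essentially the same approach as the paper: both exploit completeness to force an exact per-agent count on a designated sub-block of each category, thereby encoding the lower bound as an upper bound. The only (cosmetic) difference is that the paper uses a single block $Q_C$ of size $n\ell_C$ with upper bound $\ell_C$ per category, whereas you split it into $\ell_C$ transversal blocks of size $n$ with unit bounds; the pigeonhole mechanism and the two inclusions are argued identically.
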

\begin{proof}
    Let $\calC$ be the set of categories with associated lower bounds $(\ell_C)_{C \in \calC}$ and upper bounds $(h_C)_{C \in \calC}$ in a partition matroid constraint with lower bounds. Let $\allocs$ be the set of complete and feasible allocations under this constraint. Note that in any allocation in $\allocs$, from each category $C$, \emph{some} subset of $n \cdot \ell_C$ goods are divided evenly among the agents (so each agent receives $\ell_C$ goods), and the rest are divided so that each agent receives at most $h_C-\ell_C$ of them. This inspires the following construction.  
    
    We create $\prod_{C \in \calC} \binom{|C|}{n \cdot \ell_C}$ many partition matroids, one for each possible way of selecting $Q_C \subseteq C$ of size $n \cdot \ell_C$ from each $C \in \calC$. In the partition matroid corresponding to a given choice of $Q = (Q_C)_{C \in \calC}$, there are two categories for every category $C \in \calC$: $C^1_Q = Q_C$ with an upper bound of $h_{C^1_Q} = \ell_C$ and $C^2_Q = C \setminus C^1_Q$ with an upper bound of $h_C-\ell_C$. The key observation is that under any \emph{complete} and feasible allocation subject to this partition matroid, goods in $C^1_Q$ will be divided exactly evenly (so that each agent receives $\ell_C$ of them) while goods in $C^2_Q$ will be divided so that each agent receives at most $h_C-\ell_C$ of them. This immediately yields that any complete and feasible allocation under any such partition matroid is a complete and feasible allocation in $\allocs$. 

    To see the converse, take any $A \in \allocs$. For each $C \in \calC$, define $Q^1_C = \cup_{i \in N} Q_{C,i}$, where $Q_{C,i}$ is an arbitrary subset of $A_i \cap C$ of size $\ell_C$. Note that the $n \cdot \ell_C$ goods in $Q^1_C$ are divided evenly among the agents, with each receiving exactly $\ell_C$ of them. Also, since $|A_i \cap C| \le h_C$, each agent must receive at most $h_C - \ell_C$ goods from $Q^2_C = C \setminus Q^1_C$. Thus, $A$ is a complete and feasible allocation under the partition matroid corresponding to $Q = (Q_C)_{C \in \calC}$.
\end{proof}

Plugging in \Cref{lem:lowerboundaltworlds} into \Cref{cor:union-matroids} yields \half-EF1 and PO of complete MNW allocations subject to partition matroids with lower bounds. It is not clear if we can decompose the set of all (possibly incomplete) feasible allocations subject to partition matroids with lower bounds in the same manner, which would allow us to directly apply \Cref{cor:union-matroids} for MNW allocations, as our proof of \Cref{lem:lowerboundaltworlds} crucially uses completeness. Nonetheless, we show that \half-EF1 and PO guarantees can be extended to MNW allocations too.

\begin{restatable}{theorem}{partitionLB}\label{thm:partitionLB}
    Under any partition matroid constraint with lower bounds, every MNW and complete MNW allocation is \half-EF1 and PO. 
\end{restatable}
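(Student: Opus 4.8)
The plan is to dispatch the two claims of the theorem separately, since the \emph{complete} MNW case is essentially already in hand while the (possibly incomplete) MNW case needs a new reduction. For complete MNW allocations, I would simply combine the machinery already developed: \Cref{lem:lowerboundaltworlds} writes the set of complete-and-feasible allocations under a partition matroid with lower bounds as a union $\allocs = \cup_{t=1}^k \allocs_t$, where each $\allocs_t$ is the set of complete-and-feasible allocations under a (base-orderable) partition matroid constraint, and \Cref{cor:union-matroids} then guarantees that every complete MNW allocation over this union is \half-EF1 and PO. This half of the statement therefore requires no new idea.

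The substance is the claim for MNW allocations that are allowed to be incomplete, where the decomposition of \Cref{lem:lowerboundaltworlds} is unavailable because it relied crucially on completeness. My approach is a \emph{completion reduction}: given any MNW allocation $A$ over the set of all feasible allocations, I would construct a complete-and-feasible allocation $\hat{A}$ that is value-equivalent to $A$ for every agent, argue that $\hat A$ is a complete MNW allocation, invoke the complete case to conclude $\hat A$ is \half-EF1, and then transfer \half-EF1 back to $A$. The key steps, in order, are as follows. First, PO of $A$ is immediate from the standard MNW argument, since any feasible Pareto improvement would strictly raise either $\abs{P(A)}$ or $\NW(A)$. Second, every unallocated good can be placed without exceeding any upper bound: in each category $C$ the total spare capacity $\sum_{i \in N}(h_C - \abs{A_i \cap C}) = n h_C - \sum_{i \in N}\abs{A_i \cap C}$ is at least the number of unallocated goods $\abs{C} - \sum_{i \in N}\abs{A_i \cap C}$, because the feasibility assumption $\ceil{\nicefrac{\abs{C}}{n}} \le h_C$ forces $\abs{C} \le n h_C$. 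Third, and crucially, every agent $i$ with $\abs{A_i \cap C} < h_C$ must value every unallocated good of $C$ at zero, as otherwise moving that good to $i$ is a feasible change raising $\abs{P(A)}$ or $\NW(A)$, contradicting that $A$ is MNW. Consequently, any completion that only fills spare capacity routes zero-valued goods to their recipients, so $v_i(\hat A_i) = v_i(A_i)$ for all $i$ (giving $P(\hat A) = P(A)$ and $\NW(\hat A) = \NW(A)$), with $\hat A_i \supseteq A_i$; such a completion can always be realized as a single feasible assignment, handled category-by-category since the categories are independent. Fourth, $\hat A$ is a complete MNW allocation: any complete-and-feasible $\hat B$ that beat $\hat A$ lexicographically (in $\abs{P}$ and then $\NW$) would, being feasible, also beat $A$, contradicting the MNW optimality of $A$. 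The complete case then yields that $\hat A$ is \half-EF1.

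The final step is to transfer \half-EF1 from $\hat A$ to $A$, and this is the step I expect to require the most care and is the main obstacle, because agent $i$ may positively value goods that the completion routed into $\hat A_j$, so $v_i(\hat A_j)$ can strictly exceed $v_i(A_j)$ and the EF1 comparisons are not literally identical. I would resolve this by exploiting the monotonicity $\hat A_j \supseteq A_j$ together with $v_i(A_i) = v_i(\hat A_i)$ and a short case split on the good $g$ whose removal witnesses \half-EF1 of $\hat A$: if $g \in A_j$, then the extra goods only increase the envied bundle, so $v_i(A_i) \ge \half\, v_i(\hat A_j \setminus \set{g}) \ge \half\, v_i(A_j \setminus \set{g})$; and if $g$ is one of the added (zero-to-recipient) goods, then $v_i(A_i) \ge \half\, v_i(\hat A_j \setminus \set{g}) \ge \half\, v_i(A_j) \ge \half\, v_i(A_j \setminus \set{g'})$ for any $g' \in A_j$. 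Either way the \half-EF1 condition for $A$ holds (trivially if $A_j = \emptyset$). Combined with PO from the first step, this establishes that every MNW allocation is \half-EF1 and PO, completing the theorem.
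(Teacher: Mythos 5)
Your proof is correct and follows essentially the same route as the paper's: the complete case via \Cref{lem:lowerboundaltworlds} and \Cref{cor:union-matroids}, and the incomplete case by extending $A$ to a complete, feasible allocation with identical utilities, arguing it is a complete MNW allocation, and transferring \half-EF1 back to $A$. Your explicit case split on whether the removed witness good lies in $A_j$ (and your direct zero-value argument for the filled-in goods) makes explicit a detail the paper glosses over, but the underlying argument is the same.
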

\begin{proof}
    The result for complete MNW allocations (with Pareto optimality only within the set of all complete and feasible allocations) follows from \Cref{lem:lowerboundaltworlds} and \Cref{cor:union-matroids}. 
    
    Let $\calC$ be the set of categories with associated lower bounds $(\ell_C)_{C \in \calC}$ and upper bounds $(h_C)_{C \in \calC}$ in any partition matroid constraint with lower bounds, and let $A$ be any (possibly incomplete) MNW allocation in that instance. The fact that $A$ is PO (within the set of all feasible allocations) follows exactly as in the proof of \Cref{thm:matroid} (as it does not use the matroid property of the constraint). 
    
    Let us now show that $A$ is \half-EF1. First, we extend $A$ to a complete allocation $\tilde{A}$ by arbitrarily allocating the unallocated goods while keeping the allocation feasible. This is possible for partition matroids with lower bounds because our assumption of $\ell_C \le \floor{\nicefrac{|C|}{n}} \le \ceil{\nicefrac{|C|}{n}} \le h_C$ for each $C \in \calC$ means that whenever there is an unallocated good from any category, there must be an agent who can feasibly receive it.

    Note that $\tilde{A}$ must be a complete MNW allocation. This uses the same argument as in the proof of \Cref{thm:matroid} too: if $\tilde{A}$ was not a complete MNW allocation, we could find a complete MNW allocation $B$ that would either give positive utility to more agents or to the same set of agents but with a higher Nash welfare than that of $\tilde{A}$. Since $\tilde{A}$ and $A$ induce the same utilities, this would also contradict $A$ being an MNW allocation. Hence, \half-EF1 of $\tilde{A}$ follows from above. 

    To see that $A$ is \half-EF1, fix any pair of agents $i,j \in N$. We want to show that either $A_j = \emptyset$ or $2v_i(A_i) \ge v_i(A_j)-v_i(g)$ for some $g \in A_j$. From \half-EF1 of $\tilde{A}$, we have that either $\tilde{A}_j = \emptyset$ or $2v_i(\tilde{A}_i) \ge v_i(\tilde{A}_j)-v_i(g^*)$ for some $g^* \in \tilde{A}_j$. In the former case, $A_j \subseteq \tilde{A}_j$ would imply $A_j = \emptyset$. And in the latter case, we have 
    \[
    2v_i(A_i) = 2v_i(\tilde{A}_i) \ge v_i(\tilde{A}_j \setminus \set{g^*}) \ge v_i(A_j \setminus \set{g^*}),
    \]
    where the first transition holds because $\tilde{A}$ cannot be a Pareto improvement over $A$ ($A$ is PO) and the last transition holds again because $A_j \subseteq A^*_j$.
\end{proof}

\subsection{Implications}\label{sec:implications}

\paragraph{Implication for partition matroids.} For partition matroids (cardinality constraints), \cite{biswas2018fair} show that an EF1 allocation always exists, whereas \Cref{thm:matroid} shows that MNW allocations achieve \half-EF1 and PO, sacrificing some fairness for added efficiency.  

\paragraph{Implications for balancedness.} Balancedness is the special case of partition matroids with lower bounds, where all bundle sizes are almost equal: $|A_i| \in \set{\floor{\nicefrac{m}{n}},\ceil{\nicefrac{m}{n}}}$ for all $i \in N$. It is known that round robin achieves exact EF1 subject to balancedness~\citep{CKMP+19}, but we provide an example in \Cref{app:round-robin-inefficient} showing that it is inefficient.
In contrast, \Cref{thm:partitionLB} shows that MNW allocations subject to balancedness are \half-EF1 and PO, sacrificing some fairness in favor of increased efficiency, as in the case of partition matroids. To the best of our knowledge, this is the first result about balanced MNW allocations.

\paragraph{Implication for goods with copies.} As this is a special case of partition matroids, the same implication holds for MNW allocations. However, for this case, \cite{GHLT24} argue that a more suitable notion of fairness is \efwc, which is a strengthening of EF1. This is because a chore division instance with additive costs can be modeled as a goods with copies instance with additive values and $n-1$ copies of a good corresponding to every chore: \efwc in the goods with copies instance would imply EF1 for the underlying chore division instance, while EF1 for the goods with copies instance would not. 

\begin{definition}[(Approximate) Envy-Freeness Up to One Good Without Commons (\efwc)]
For $\alpha \in [0,1]$, an allocation $A \in \allocs$ is $\alpha$-approximate \emph{envy-freeness up to one good without commons} ($\alpha$-\efwc) if, for every pair of agents $i,j \in N$, either $v_i(A_i) \geq \alpha \cdot v_i(A_j)$ or there exists a category $C \in \calC$ and a good $g \in C$ such that $A_i \cap C = \emptyset$ and $v_i(A_i) \ge \alpha \cdot v_i(A_j \setminus \set{g})$. In words, agent $i$ should not envy agent $j$, up to a factor of $\alpha$, after excluding some good from agent $j$'s bundle \emph{that agent $i$ does not also have a copy of}. When $\alpha=1$, we simply call it an \efwc allocation. 
\end{definition}

The fact that we can only exclude a good from $A_j$ of which agent $i$ does not already have a copy makes $\alpha$-\efwc stronger than $\alpha$-EF1. We already know from \Cref{thm:matroid,thm:base-orderable} that every MNW and complete MNW allocation for a goods with copies instance achieves \half-EF1. We can strengthen this to show that they in fact achieve \half-\efwc.

First, we need the following lemma, showing that for goods with copies (in fact, even for partition matroids), completeness comes without a price for MNW allocations.

\begin{restatable}{lemma}{partitionMNWComplete}\label{lem:partitionMNWComplete}
    Under any partition matroid constraint, every complete MNW allocation is an MNW allocation (among the set of all feasible allocations). 
\end{restatable}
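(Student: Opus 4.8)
The plan is to show that the lexicographic MNW optimum over \emph{all} feasible allocations (first maximize $\abs{P(\cdot)}$, then $\NW(\cdot)$) is already attained by some \emph{complete} feasible allocation. Since a complete MNW allocation attains, by definition, the best possible value of $(\abs{P(\cdot)},\NW(\cdot))$ among complete feasible allocations, it would then follow that it also attains the global optimum and is therefore an MNW allocation among all feasible allocations.

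First I would establish the key structural fact that, under a partition matroid, any feasible allocation can be extended to a complete feasible allocation \emph{without decreasing any agent's utility}. The reason is that partition matroids impose only upper bounds. Suppose $A$ is feasible and some good in a category $C \in \calC$ is unallocated. Then $\sum_{i \in N} \abs{A_i \cap C} < \abs{C} \le n \cdot h_C$, where the last inequality uses the feasibility assumption $h_C \ge \ceil{\abs{C}/n}$. By the pigeonhole principle some agent $i$ has $\abs{A_i \cap C} < h_C$, so the unallocated good may be added to $A_i$ while preserving feasibility. Iterating this until no good is unallocated yields a complete feasible allocation $\tilde{A}$ with $\tilde{A}_i \supseteq A_i$, and hence $v_i(\tilde{A}_i) \ge v_i(A_i)$, for every $i \in N$.

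Next I would apply this to a global MNW allocation $A^*$ (an MNW allocation among all feasible allocations) to produce a complete feasible $\tilde{A}^*$ with weakly larger utilities. Monotonicity of utilities gives $P(A^*) \subseteq P(\tilde{A}^*)$; but $A^*$ maximizes $\abs{P(\cdot)}$ over all feasible allocations, so in fact $P(A^*) = P(\tilde{A}^*)$, and then $\NW(\tilde{A}^*) \ge \NW(A^*)$ together with the optimality of $A^*$ forces $\NW(\tilde{A}^*) = \NW(A^*)$. Thus $\tilde{A}^*$ is a complete and feasible allocation that attains the global MNW optimum $(\abs{P(A^*)}, \NW(A^*))$.

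Finally, let $A$ be any complete MNW allocation. Since $\tilde{A}^*$ is complete and feasible, $A$ is at least as good as $\tilde{A}^*$ in the lexicographic objective, i.e. it meets the global optimum; but being feasible it cannot exceed the global optimum, so $A$ attains exactly $(\abs{P(A^*)}, \NW(A^*))$ and is therefore an MNW allocation among all feasible allocations. The only real content is the first step --- the cost-free completion --- which is where the partition-matroid structure (pure upper bounds together with the $h_C \ge \ceil{\abs{C}/n}$ feasibility condition) is essential; everything else is a lexicographic-optimality bookkeeping argument identical in spirit to the completeness arguments already used in the proofs of \Cref{thm:matroid} and \Cref{thm:partitionLB}.
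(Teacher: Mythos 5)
Your proposal is correct and uses essentially the same idea as the paper's proof: the pigeonhole observation that an unallocated good in category $C$ can always be given to some agent with $\abs{A_i \cap C} < h_C$ (using $h_C \ge \ceil{\abs{C}/n}$), so a feasible allocation can be completed without lowering any utility. The paper merely packages this as a proof by contradiction with an extremal choice (a global MNW allocation minimizing the number of unallocated goods, extended one good at a time), whereas you complete the allocation directly and then do the lexicographic bookkeeping; the substance is the same.
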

\begin{proof}
    As part of the proof of \Cref{thm:partitionLB}, we proved that every MNW allocation under a partition matroid constraint with lower bounds can be extended to a complete MNW allocation. This proof is similar, but in the opposite direction. 
    
    For contradiction, assume this is false, and there exists some complete MNW allocation $A$ for a instance constrained by the partition matroid constraints $\calC$ that is not MNW over all allocations. In this case, we know that there can be no complete allocations that maximize Nash Welfare for this instance, or that would contradict the fact that $A$ is complete MNW. Therefore, all MNW allocations must be incomplete, let $A'$ be the MNW allocation that minimizes the number of unallocated goods.

    Since we know that $\calC$ admits at least $1$ complete allocation, it must be the case that for each constraint $C_t \in \calC$, we have that $h_t \geq \ceil{|C_t|/n}$. Since $A'$ is not complete, there must be some constraint $C' \in \calC$ and some good $g \in C'$ such that $g \not \in A'_i$ for all $i \in N$. Further, it must be the case that for some agent $j \in N$, $|A'_j \cap C'| < \ceil{|C_t|/n}$. If not then there would be $\ceil{|C_t|/n}n \geq |C_t|$ allocated goods from $C_t$, contradicting the existence of $g$. Finally, note that because of this, $A'_j \cup \set{g}$ must be a feasible bundle, and we must have that $v_j(A'_j) \leq v_j(A'_j \cup \set{g})$. Consider what happens if we replace $A'_j$ with $A'_j \cup \set{g}$, this allocation would not decrease the number of agents with positive utility, or the product of positive utilities, meaning it must also be an MNW allocation. However, it will strictly decrease the number of unallocated goods, causing a contradiction, and proving that $A$ must be MNW over all allocations.
\end{proof}

Now, we are ready to prove that MNW allocations in fact get \half-\efwc for goods with copies.

\begin{restatable}{theorem}{copies}\label{thm:copies}
    In the case of goods with copies, every MNW and complete MNW allocation is \half-\efwc and PO.
\end{restatable}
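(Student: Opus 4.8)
The plan is to leverage the two facts we have already established for this (partition-matroid) setting: by \Cref{thm:matroid} every MNW allocation is PO, and by \Cref{lem:partitionMNWComplete} every complete MNW allocation is itself an MNW allocation (among all feasible allocations). Consequently PO is immediate, and it suffices to prove that every MNW allocation $A$ is $\half$-\efwc; the complete MNW case then follows for free. I would reuse the machinery from the proof of \Cref{thm:matroid}: reduce to the unconstrained submodular instance with weighted-rank valuations $\tv_i(X) = \max_{T \subseteq X, T \in \calI} v_i(T)$, under which $A$ is an MNW allocation and hence satisfies MEF1.

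Fix a pair of agents $i,j$ (assume $A_j \neq \emptyset$, else \efwc holds trivially since $v_i(A_j)=0$). I would split $A_j$ into the \emph{common} goods $T = \set{g \in A_j : A_i \cap C_g \neq \emptyset}$, whose category $C_g$ already contains a good of $i$, and the \emph{non-common} goods $S = A_j \setminus T$. Two structural observations drive the argument. First, since $h_C = 1$ and the goods in a category are copies, mapping each $g \in T$ to $i$'s unique (equally valued) good in category $C_g$ is injective and value-preserving, giving $v_i(T) \le v_i(A_i)$. Second, because $\tv_i(X) = \sum_{C} \max_{g \in X \cap C} v_i(g)$ and copies share a value, the marginal $\tv_i(A_i \cup A_j \setminus \set{g}) - \tv_i(A_i)$ equals $v_i(S)$ when $g \in T$ (deleting a common good leaves $i$'s own copy in that category, so the marginal is unchanged) and equals $v_i(S \setminus \set{g})$ when $g \in S$ (deleting a non-common good drops exactly that category's contribution).

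Now I would apply MEF1: there is some $g \in A_j$ with $\tv_i(A_i \cup A_j \setminus \set{g}) - \tv_i(A_i) \le \tv_i(A_i) = v_i(A_i)$. If this MEF1 good is common ($g \in T$), the second observation yields $v_i(S) \le v_i(A_i)$, so $v_i(A_j) = v_i(T) + v_i(S) \le 2 v_i(A_i)$ and the first clause of \efwc holds. If instead $g \in S$, we get $v_i(S \setminus \set{g}) \le v_i(A_i)$, whence $v_i(A_j \setminus \set{g}) = v_i(T) + v_i(S \setminus \set{g}) \le 2 v_i(A_i)$; since $g \in S$ means $A_i \cap C_g = \emptyset$, this is precisely the second clause of \efwc with category $C = C_g$. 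Either way $A$ is $\half$-\efwc.

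I expect the main obstacle to be the bookkeeping in the second structural observation: correctly arguing that, under the reduced submodular valuation, deleting a common good from $A_j$ does not decrease the marginal value (because agent $i$ retains an equally valued copy in that category), whereas deleting a non-common good removes exactly its value. Getting this category-sensitive distinction right is what converts the generic MEF1 guarantee into the sharper, copy-aware \efwc guarantee; the remaining pieces, including the injection bounding $v_i(T)$ and the final two-case split, are routine given the copy structure.
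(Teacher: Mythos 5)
Your proposal is correct and follows essentially the same route as the paper's proof: both reduce to the MEF1 guarantee for the induced weighted-rank (submodular) valuations, invoke \Cref{lem:partitionMNWComplete} to cover complete MNW allocations, and then case-split on whether the MEF1 witness good lies in a category where agent $i$ already holds a copy. Your explicit decomposition of $A_j$ into common and non-common goods, with the injection giving $v_i(T) \le v_i(A_i)$, just makes quantitative what the paper argues via the identity $B_i(A_i \cup A_j \setminus \set{g}) = B_i(A_i \cup A_j)$ in the common-good case.
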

\begin{proof}
    For some instance constrained by a partition matroid $\mathcal{M} = (M,\mathcal{I})$, and any agent $i \in N$, let $B_i: 2^M \rightarrow \mathcal{I}$ be the function that maps any set $S \subseteq M$ to $i$'s most preferred feasible allocation $I$ such that $I \subseteq S$.

    First notice that in any MNW allocation $A$ in a matroid constrained instance, and for any pair of agents $i,j \in N$, either $A_j = \emptyset$, or there exists some good $g \in A_j$ such that  $2v_i(A_i) \geq v_i(B_i(A_i \cup A_j \setminus \set{g}))$. One can see that this condition is derived from the definition of ``Marginal EF1'' that \cite{CKMP+16} prove holds for unconstrained instances with submodular valuation functions, for the purposes of this proof we will refer to it as ``Constrained Marginal EF1''. In the proof of \Cref{thm:matroid}, we show that every MNW allocation over a matroid constraint corresponds to an allocation over an unconstrained setting where each agent $i \in N$ has the valuation function $\tv_i$, equaling the the ``weighted rank function'' of the matroid with their original valuation function $v_i$ being used as the weights. It is known that this unconstrained allocation will be Marginal EF1, and for every $i,j \in N$, we will have $\tv_i(A_j) = v_i(A_j)$. This directly implies that the MNW allocation over the original constrained instance will be Constrained Marginal EF1.

    It is clear that if an allocation is constrained marginal EF1, then it will also be \half-EF1. What we will show next is that if the property holds for an allocation $A$ over a goods with copies instance, then $A$ will be \half-\efwc.

    For contradiction, assume that this is false, and for some goods with copies instance constrained by $\calC$, there exists an allocation $A$ that is Constrained Marginal EF1 but not \half-\efwc.

    Since we know that $A$ is Constrained Marginal EF1, we know there exists some constraint $C \in \calC$, and some $g \in A_j \cap C$, such that $2v_i(A_i) \geq v_i(B_i(A_i \cup A_j \setminus g))$. We will analyze two cases:

    Case 1: $|A_i \cap C| = 0$. In this case, \half-\efwc directly follows from the fact that $2v_i(A_i) \geq v_i(B_i(A_i \cup A_j \setminus \set{g}))$ implies $2v_i(A_i) \geq v_i(A_j \setminus \set{g})$.

    Case 2: $|A_i \cap C| = 1$. In this case, by the way that goods with copies constraints are structured, it must be true that $B_i(A_i \cup A_j \setminus \set{g}) = B_i(A_i \cup A_j)$. We know that agent $i$ has the same valuation for all goods in $C$. We also know that there can only be $1$ good from $C$ in any feasible bundle, and that $A_i \cup A_j$ contains at least two goods from $C$ ($g$ as well as some good $g' \in A_i$). Since $A_i \cup A_j$ has multiple goods from $C$ that both have equivalent value, it is easy to see that removing one of them will have no affect on the best possible bundle $i$ can construct using the goods. Therefore, in this case, Constrained Marginal EF1 implies that $2v_i(A_i) \geq v_i(B_i(A_i \cup A_j)) \geq v_i(A_j)$, which gives us \half-\efwc.

    These two cases exhaust all possible scenarios (either $A_i$ has a copy from $C$ or it does not), giving us a contradiction.
    
    This proves that for goods with copies instances, all MNW allocations will be \half-\efwc. From \Cref{lem:partitionMNWComplete}, we can also conclude that the set of all complete MNW allocations for a given instance will be a subset of the MNW allocations, and they will therefore be \half-\efwc as well.
\end{proof}

Because the primary motivation of goods with copies is that it can be used to model chore division as a special case, this is one example where free disposal of goods is not possible (otherwise, some chores may be assigned to multiple agents). Hence, it is useful that \Cref{thm:copies} handles not only MNW allocations, but also complete MNW allocations. 

To the best of our knowledge, this is the first known approximation for \efwc under general additive preferences. \cite{GHLT24} leave the existence of \efwc open, only proving it for special cases such as that of leveled preferences. The fact that MNW allocations, in addition to being \half-\efwc, also satisfy PO is an added bonus.

However, there is one significant downside to \Cref{thm:copies}. When a chore division instance is modeled as a goods with copies instance, while \efwc for the latter implies EF1 for the former, even \half-\efwc for the latter does not imply any reasonable approximation of EF1 for the former, as the following example shows.

\begin{example}
    Consider the following instance with $3$ agents and $3$ goods labeled $g_1$ through $g_3$, each with $2$ copies.
    \begin{table}[ht]
        \centering
        \begin{tabular}{c | c c c}
            \toprule
              & $g_1$ & $g_2$ & $g_3$ \\
            \midrule
            Agent $1$ & 1     & 1/100 & 1/100 \\
            Agent $2$ & 1     & 1     & 1     \\
            Agent $3$ & 1     & 1     & 1     \\
            \bottomrule
        \end{tabular}
    \end{table}

    The MNW allocation $A$ (unique, up to symmetry between agents $2$ and $3$) is given by $A_1 = \set{g_1}$, $A_2 = \set{g_1,g_2,g_3}$, and $A_3 = \set{g_2,g_3}$. One can verify that this allocation is \half-\efwc. However, for the underlying chore division instance, where the values of each agent for each chore is the negation of the value for the corresponding good given above, the corresponding allocation $A'$ of chores is given by $A'_1 = \set{g_2,g_3}$, $A'_2 = \emptyset$, and $A'_3 = \set{g_1}$. This allocation provides zero approximation of EF1 because we have $v_1(A'_1 \setminus \set{g_2}) = v_1(A'_1 \setminus \set{g_3}) = -1/100$ but $v_1(A'_2) = 0$.
\end{example}

Recently, \cite{garg2024halfef1chores} show that \half-EF2 and PO can be achieved for chore division through a Fisher market based algorithm. It would be interesting to extend this to goods with copies.

\paragraph{Search for EF1+PO allocations.} The primary goal of our work is to initiate a systematic study of the existence of \emph{fair and efficient} allocations under feasibility constraints. The ultimate goal of seeking (exactly) EF1 and PO allocations still remains a major open question in all cases.

\begin{open}
    Does an allocation that is EF1 and PO always exist subject to an arbitrary matroid constraint and heterogeneous additive valuations? What about just the special case of goods with copies?
\end{open}

It is worth remarking that subject to matroid constraints (resp., budget constraints), MNW allocations achieve \half-EF1 (resp., \sfrac{1}{4}-EF1) with PO (\Cref{thm:matroid} and \cite{wu2021budget}), while the open question is whether exact EF1 (resp., \half-EF1) with PO is achievable. Curiously, in both cases, MNW allocations achieve half the fairness level of the known upper bound subject to PO. 

In fact, even the existence of an EF1 allocation remains open beyond partition (resp., base-orderable) matroids for heterogeneous (resp., identical) additive valuations~\citep{biswas2018fair,biswas2019matroid}, and the existence of an \efwc allocation remains open for goods with copies.

\subsection{Impossibility of Stronger Guarantees}\label{sec:stronger}

In this section, we explore whether the fairness and efficiency guarantees established in \Cref{thm:matroid} (and in the subsequent results above) can be strengthened. We explore several directions, establishing a negative result in almost all cases but also identifying interesting open questions. 

\subsubsection{Tightness of the Fairness Guarantee of MNW}

First, we show that MNW does not achieve a better guarantee than $\half$-EF1 under matroid constraints.

\begin{restatable}{theorem}{mnwtight}\label{thm:mnw-tight}
For any $\epsilon > 0$, there exist instances with partition matroid constraints and the balancedness constraint in which no MNW (or complete MNW) allocation is $(\half + \epsilon)$-EF1.
\end{restatable}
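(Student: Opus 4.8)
The plan is to exhibit a single family of valuation profiles, parametrized by an integer $k$, that simultaneously serves as a partition matroid instance and a balancedness instance, and to show that its unique MNW allocation has envy that cannot be eliminated by removing one good to within a factor better than $\half + 1/(2(k-1))$.

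First I would set up the instance. Take $n = 2$ agents and, for each $t \in [k]$, a pair of goods $x_t, y_t$. Agent $1$ values $v_1(x_t) = 1$ and $v_1(y_t) = 2$, while agent $2$ values $v_2(x_t) = 0$ and $v_2(y_t) = 1$. For the partition matroid reading, I make each pair $\set{x_t, y_t}$ its own category with upper bound $1$; for the balancedness reading, I keep a single pool of $m = 2k$ goods, so that each agent must receive exactly $k$ of them. The point of the asymmetry (agent $2$ being indifferent to the $x$-goods) is to force MNW to hand agent $1$ the goods she values \emph{least}, while still giving both agents positive utility.

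Next I would pin down the MNW allocation. In both constraint models, any feasible complete allocation is determined, up to symmetry, by the number $b$ of $y$-goods given to agent $1$: then $v_1(A_1) = (k-b) + 2b = k + b$ and $v_2(A_2) = k - b$ (agent $2$ receives $k-b$ of the $y$-goods and only worthless $x$-goods otherwise), so the Nash welfare is $(k+b)(k-b) = k^2 - b^2$. Here I would note that leaving goods unallocated only lowers utilities, and that allocations giving some agent zero utility (e.g.\ all $y$-goods to agent $1$) are already excluded by the first MNW criterion of maximizing $\abs{P(A)}$; hence the product is maximized uniquely at $b = 0$. This yields the unique MNW (and complete MNW) allocation $A_1 = \set{x_1, \dots, x_k}$, $A_2 = \set{y_1, \dots, y_k}$, with $v_1(A_1) = k$ and $v_1(A_2) = 2k$. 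I would then compute the fairness level: since every good in $A_2$ is worth exactly $2$ to agent $1$, removing any single good leaves $v_1(A_2 \setminus \set{g}) = 2k - 2 = 2(k-1)$, so the best achievable EF1 factor for this pair is $v_1(A_1)/(2(k-1)) = k/(2(k-1)) = \half + 1/(2(k-1))$. Choosing $k > 1 + 1/(2\epsilon)$ makes this strictly less than $\half + \epsilon$, so the unique MNW allocation is not $(\half + \epsilon)$-EF1 under either constraint.

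The step I expect to be the main obstacle is the uniqueness/optimality argument for MNW: I must rule out all feasible allocations, including incomplete ones and those routing some $x$-goods to agent $2$, and ensure the two-stage MNW definition (first maximize the number of positively-valued agents, then the product) is respected throughout. Once the problem is reduced to the one-parameter product $k^2 - b^2$, the rest is a direct calculation; the care lies in verifying that this reduction to a single parameter is legitimate in \emph{both} the partition matroid and the balancedness formulations, so that one valuation profile handles the two claims of the theorem at once.
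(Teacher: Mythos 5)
Your proposal is correct and follows essentially the same strategy as the paper's proof: a two-agent, $2k$-good instance in which the constraint forces the unique MNW allocation to give one agent a bundle of uniformly-valued goods worth roughly twice her own bundle in her eyes, yielding an EF1 ratio of exactly $\frac{k}{2(k-1)} \to \half$. The paper's instance uses a single category of capacity $k$ (with agent $2$ as the envier) rather than your $k$ pairwise categories of capacity $1$, but the reduction to a one-parameter Nash-welfare maximization and the simultaneous treatment of the partition-matroid and balancedness readings are the same.
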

\begin{proof}
Let $k \in \bbN$. Consider the partition-matroid-constrained instance with $n = 2$ agents and $m = 2k$ goods, with the goods being partitioned into two groups, $S = \set{g_1,\dots,g_k}$ and $T = \set{g_{k+1},\ldots,g_{2k}}$. For each good $g_s \in S$, we have $v_1(g_s) = v_2(g_s) = 1$. For each good $g_t \in T$, we have $v_1(g_t) = 0$ and $v_2(g_t) = 1/2$. The partition matroid constraints dictate that no agent can receive more than $k$ goods from the entire set of goods $M = S \cup T$. 

We argue that the only MNW allocation is the complete allocation $A$ with $A_1 = S$ and $A_2 = T$. To see this, consider any allocation $A$. Let $x = v_1(A_1) = |A_1 \cap S|$. Then, $v_2(A_2) \le (k-x)+\frac{1}{2} x$, as agent $2$ can receive at most $k$ goods in total, and her utility is maximized by receiving the remaining $k-x$ goods of $S$ along with some $x$ goods from $T$. Hence, $\NW(A) \le x \cdot \left(k-\frac{1}{2}x\right)$. The right hand side is uniquely maximized at $x=k$, resulting in the allocation $A$ with $A_1 = S$ and $A_2 = T$ uniquely attaining the highest Nash welfare of $\frac{k^2}{2}$. 

In this allocation, agent $1$ is not envious, but agent $2$ has $v_2(A_2) = \frac{1}{2}k$ whereas $v_2(A_1 \setminus \set{g}) = k-1$ for any $g \in A_1$. Hence, this allocation is $\frac{k}{2(k-1)}$-EF1, i.e., $(\frac{1}{2} + \frac{1}{2k-2})$-EF1. For any $\epsilon > 0$, one can choose $k > 1+\frac{1}{2\epsilon}$ to ensure that the unique MNW allocation is not $(\half + \epsilon)$-EF1.

Since the unique MNW allocation in this instance is in fact balanced, changing the partition matroid constraint (each agent receives at most $k$ goods) to the balancedness constraint (each agent receives exactly $k$ goods) results in the same allocation being the unique MNW allocation, thus establishing tightness of the $\half$-EF1 guarantee with respect to the balancedness constraint too. 
\end{proof}

The tightness result above clearly extends to all families of constraints more general than partition matroid constraints (in particular, to laminar and base-orderable matroid constraints) or balancedness (in particular, to partition matroids with lower bounds). Thus, the only family of constraints we study that is not covered is goods with copies, for which we provide a similar example in \Cref{app:lowerboundcopies}.

\subsubsection{Strengthening Fairness Criteria}

One common strengthening of EF1 is ``Stochastically Dominant EF1'' (SD-EF1). An allocation satisfies SD-EF1 if it can be inferred that the allocation is EF1 by only looking at the agents' ordinal preferences over the goods. We provide the formal definition of SD-EF1 below:

Define $\succeq_i$ (resp., $\succ_i$) as the weak (resp., strict) ordering over the goods in $M$ induced by $v_i$, where, for all $g,g' \in M$, $g \succeq_i g'$ if and only if $v_i(g) \geq v_i(g')$ and $g \succ_i g'$ if and only if $v_i(g) > v_i(g')$.

\begin{definition}[SD-EF1]
    An allocation $A$ of a set of goods $S$ is stochastic-dominance envy-free up to one good (SD-EF1) if for all $i,j \in N$ with $A_j \neq \emptyset$, there exists a $g^* \in A_j$ such that for all $g \in S$, $\card{\set{g' \in A_i : g' \succeq g}} \ge \card{\set{g' \in A_j \setminus \set{g^*} : g' \succeq g}}$.
\end{definition}

In unconstrained instances, SD-EF1 is know to be achievable through the ``round-robin'' mechanism. Since round-robin always produces a balanced allocation, this means that SD-EF1 is achievable under balancedness constraints as well. However, under more general partition matroid constraint, we can show that SD-EF1 is no longer able to be guaranteed.

\begin{restatable}{theorem}{sdefone}\label{thm:sdef1}
    Under partition matroid constraints, SD-EF1 cannot be guaranteed.
\end{restatable}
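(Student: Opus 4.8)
The plan is to exhibit a single partition-matroid instance in which \emph{no} complete feasible allocation is SD-EF1, from which the theorem follows immediately. As with round robin and balancedness, the meaningful comparison is to \emph{complete} allocations: the all-empty allocation is vacuously SD-EF1 (the condition quantifies only over pairs with $A_j \neq \emptyset$), so the content of the statement is that completeness and SD-EF1 cannot always be met together. Since SD-EF1 is a purely ordinal criterion, I only need to specify each agent's preference order $\succeq_i$ together with the categories and their upper bounds.

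The first step is to \emph{collapse the space of complete feasible allocations} using tight categories. If a category $C$ satisfies $\card{C} = n \cdot h_C$, then completeness forces every agent to receive \emph{exactly} $h_C$ goods from $C$, since each agent can take at most $h_C$ and all $\card{C}$ goods must be placed. Making the categories tight therefore reduces the set of complete feasible allocations to a small, rigidly structured family in which the only remaining freedom is \emph{which} goods of a category go to which agent. A natural realization uses a tight category of size $n$ with bound $1$ (giving every agent exactly one good) together with a second category that is just \emph{deficient} -- e.g. of size $n-1$ with bound $1$, so that all but one agent receive a second good -- forcing the bundle-size profile $(2,2,\dots,2,1)$. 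This guarantees that some agent is left with a strictly smaller bundle, which is the structural asymmetry I want to exploit.

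The second step is the clean ordinal form of SD-envy. For a complete allocation and an ordered pair $(i,j)$ with $A_j \neq \emptyset$, the best excluded good is agent $i$'s $\succeq_i$-favorite element of $A_j$; with that choice, SD-EF1 for $(i,j)$ is equivalent to $A_i$ stochastically dominating $A_j \setminus \set{g^*}$ under $\succeq_i$. For the short bundles above this becomes a transparent test: a size-one agent $i$ SD-envies a size-two agent $j$ exactly when \emph{both} goods of $A_j$ are ranked above $i$'s single good in $\succeq_i$. I would then \emph{design the preference orders} so that this domination is unavoidable -- for every assignment in the forced family, some ordered pair exhibits a total domination -- by encoding a cyclic conflict in which the agents compete for an overlapping top tier of goods, so that whichever agent is pushed to the smaller bundle is simultaneously dominated through the other category.

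The hard part is exactly this preference design, because the allocator retains real flexibility, and two tempting constructions both \emph{fail}: with identical preferences a round-robin-style assignment balances the cumulative rank counts and is SD-EF1, while with ``aligned'' preferences (disjoint favorite sets) the assignment giving each agent her top goods is SD-EF1. The instance must therefore sit strictly between these regimes -- enough shared competition that no agent can be fully satisfied, yet enough asymmetry that no balanced compromise survives the tight category constraints -- and in particular it must rule out the allocator's natural escape of handing the smaller bundle to the agent who values her lone good most (which must be blocked by forcing a residual envy among the remaining agents). Verifying that \emph{every} member of the finite forced family contains a $\succeq_i$-dominating pair is then a bounded case analysis, which I would organize and shorten using the symmetry of the construction; once one such instance is checked, the theorem is established.
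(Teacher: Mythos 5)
There is a genuine gap: your proposal is a strategy outline, not a proof. The entire content of this theorem is the exhibition of a concrete partition-matroid instance in which no complete feasible allocation is SD-EF1, together with a verification over the (finite) family of complete feasible allocations. You correctly reduce SD-envy between small bundles to a transparent ordinal test and correctly identify that tight categories rigidify the allocation space, but you then explicitly defer the ``hard part'' --- the actual choice of categories and preference orders --- and never produce or check an instance. Saying ``once one such instance is checked, the theorem is established'' leaves the theorem unestablished. Worse, the minimal skeleton you propose (a tight category of size $n$ with bound $1$ plus a deficient category of size $n-1$ with bound $1$, forcing bundle sizes $(2,\dots,2,1)$) provably cannot work for $n=2$: with goods $\set{a_1,a_2}$ in the tight category and $\set{b}$ in the deficient one, the only possible SD-envy is from the size-one agent toward the size-two agent, and blocking both allocations in which agent $1$ is the size-one agent requires $a_2 \succ_1 a_1$ and $a_1 \succ_1 a_2$ simultaneously. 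So the design problem you postpone is not merely tedious bookkeeping; it forces a different structure than the one you sketch.

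For comparison, the paper's proof uses $2$ agents and $8$ goods partitioned into four categories of size $2$, each with upper bound $1$, so every complete feasible allocation gives each agent exactly one good per category (balanced bundles of size $4$, not the asymmetric $(2,1)$ profile you aim for). The cardinal values are chosen so that the cumulative-count conditions of SD-EF1 at thresholds $t=2$ and $t=4$ pin down the allocation of the top four goods to one of two possibilities, both of which give $g_1$ to agent $1$ and $g_5$ to agent $2$; the threshold $t=6$ condition then forces both agents to need $g_7$, a contradiction. If you want to complete your proof, you need to commit to an explicit instance of this kind and carry out the case analysis.
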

\begin{proof}
    Consider the following instance, in which there are two agents, and eight goods partitioned into $4$ subsets of $2$ goods, represented by the red groupings. We place an upper-bound of $1$ on each grouping, meaning that each agent must receive exactly $1$ good from each group.

    \begin{center}
    \renewcommand{\arraystretch}{1.2}
    \begin{tabular}{c | c c c c c c c c}
        \toprule
               & 
               \tikz[baseline, remember picture]\node (g1) {$g_1$}; & 
               \tikz[baseline, remember picture]\node (g2) {$g_2$}; & 
               \tikz[baseline, remember picture]\node (g3) {$g_3$}; & 
               \tikz[baseline, remember picture]\node (g4) {$g_4$}; & 
               \tikz[baseline, remember picture]\node (g5) {$g_5$}; & 
               \tikz[baseline, remember picture]\node (g6) {$g_6$}; & 
               \tikz[baseline, remember picture]\node (g7) {$g_7$}; & 
               \tikz[baseline, remember picture]\node (g8) {$g_8$}; \\
        \midrule
         Agent $1$ & 8 & 4 & 7 & 5 & 6 & 1 & 3 & 2 \\
         Agent $2$ & 6 & 2 & 8 & 5 & 7 & 3 & 4 & 1 \\
         \bottomrule
    \end{tabular}
    \end{center}
    
    \begin{tikzpicture}[overlay, remember picture]
        \draw[red, thick, rounded corners] ($(g1.north west)+(0,0)$) rectangle ($(g2.south east)+(0,-0)$);
        \draw[red, thick, rounded corners] ($(g3.north west)+(0,0)$) rectangle ($(g4.south east)+(0,-0)$);
        \draw[red, thick, rounded corners] ($(g5.north west)+(0,0)$) rectangle ($(g6.south east)+(0,-0)$);
        \draw[red, thick, rounded corners] ($(g7.north west)+(0,0)$) rectangle ($(g8.south east)+(0,-0)$);
    \end{tikzpicture}

    In any SD-EF1 allocation $A$ over this instance, it follows from the definition that each agent must get at least $1$ of their top $2$ goods. This means that $A_1$ must contain one of $\set{g_1,g_3}$ and $A_2$ must contain at least one of $\set{g_3,g_5}$. Similarly, it must be the case that each agent receives at least $2$ of their top $4$ goods. This means that both $A_1$ and $A_2$ must contain exactly $2$ of $\set{g_1,g_3,g_4,g_5}$.

    Notice that under these restrictions, in any SD-EF1 allocation where agent $1$ receives $g_3$, they cannot receive $g_5$ (or they would have both of agent $2$'s top $2$ goods) or $g_4$ (due to the constraints). Since the allocation over the top $4$ goods must be balanced (each agent receiving exactly two of them), the only possible allocation over $\set{g_1,g_3,g_4,g_5}$ in this scenario would be $(\{g_1,g_3\},\{g_4,g_5\})$. Using the same logic, when agent $2$ receives $g_3$, the only possible allocation is $(\{g_1,g_4\},\{g_3,g_5\})$. Since one of the agents must be given $g_3$, these are the only two ways that the top $4$ goods can be allocated. Note that in both of these allocations, agent 1 gets $g_1$ and agent 2 gets $g_5$.

    Finally, consider how the remaining goods $\{g_2,g_6,g_7,g_8\}$ must be allocated to guarantee SD-EF1. Each agent must receive at least $3$ of their top $6$ goods, and since we know that each agent has exactly $2$ of their top $4$ goods, that means that agent $1$ must receive at least one of $\{g_2,g_7\}$, and agent $2$ must receive at least one of $\{g_6,g_7\}$. Since we know that agent 1 must be allocated $g_1$, the matroid constraints say they cannot receive $g_2$, so they must receive $g_7$. Similarly, agent $2$ is known to have $g_5$, so they cannot receive $g_6$, which means they must also receive $g_7$. Both agents cannot be simultaneously allocated $g_7$, hence, in this instance, there is no complete and feasible allocation that is SD-EF1.
\end{proof}

Notably, this settles the question of SD-EF1 existence for all of the main classes of constraints studied in this paper, except for goods with copies. We leave whether SD-EF1 can be achieved for goods with copies as an open question.

\begin{open}
    Does there always exist an SD-EF1 allocation of goods with copies?
\end{open}

\subsubsection{Strengthening Efficiency Criteria}

In this section, we first consider a strengthening of PO, which we term \poplus. Intuitively, an allocation $A \in \allocs$ is \poplus if it is not Pareto dominated by any unconstrained allocation, i.e. any possible allocation that could be constructed from a base set of goods, regardless if it is in $\allocs$. We define the set of complete unconstrained allocations analogously.

\begin{definition}[Unconstrained Pareto Optimality (\poplus)]
    Let $\mathcal{U}$ be the set of unconstrained allocations. An allocation $A \in \allocs$ is \poplus if there is no other allocation $B \in \mathcal{U}$ such that $v_i(B_i) \ge v_i(A_i)$ for each $i \in N$ and at least one inequality is strict. In words, no other allocation, feasible or infeasible, should be able to make an agent happier without making any agent less happy.
\end{definition}

For goods with copies, and hence, for partition and laminar matroid constraints, \poplus is easily seen to be too strong, even in the absence of any other desiderata. 

\begin{example}
    Consider an goods with copies instance with $3$ agents, and $3$ goods with $2$ copies each. Agents have the following valuations for the goods, where $\epsilon < \frac{1}{2}$ is some very small value:
    \begin{table}[ht]
        \centering
        \begin{tabular}{c | c c c}
            \toprule
                   & $g_1$ & $g_2$ & $g_3$ \\
            \midrule
             Agent 1 & 1     & $\epsilon$     & $\epsilon$     \\
             Agent 2 & $\epsilon$     & 1     & $\epsilon$     \\
             Agent 3 & $\epsilon$     & $\epsilon$     & 1     \\
             \bottomrule
        \end{tabular}
    \end{table}

    If we ignored the restrictions on feasibility enforced by the goods with copies constraints, then clearly the most efficient way to allocate the goods would be to assign both copies of $g_i$ to $a_i$ for each $i \in \set{1,2,3}$. This would result in each agent having a valuation of $2$. However, under the goods with copies constraints, each agent can only get a single copy of the one good they find valuable, so in any feasible allocation, there would be no agent with a utility higher than $1 + 2\epsilon < 2$.
\end{example}

However, the balanced setting is a curious special case. A result by \cite{CFS17} implies the following.

\begin{theorem}[\citeauthor{CFS17}~\citeyear{CFS17}]
    In the balanced case, when each agent has a strictly positive valuation for each good, there always exists a feasible allocation that is \poplus.
\end{theorem}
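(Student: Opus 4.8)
The plan is to exhibit one explicit balanced allocation and prove it is \poplus directly. I would take $A$ to be a balanced allocation maximizing the utilitarian welfare $W(A) := \sum_{i \in N} v_i(A_i)$ (equivalently, a maximum-weight assignment of goods to agents subject to the balancedness size constraints); such an $A$ exists since there are only finitely many balanced allocations. Because balanced allocations are complete and every valuation is strictly positive, each agent receives strictly positive utility under $A$, which already rules out the trivial dominating moves that would starve an agent. It then suffices to show that no allocation $B \in \mathcal{U}$ (feasible or not) Pareto-dominates $A$.

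Suppose for contradiction that some $B$ dominates $A$; then $v_i(B_i) \ge v_i(A_i)$ for all $i$ with one strict inequality, so $W(B) > W(A)$. I would compare $A$ and $B$ through the reallocation multigraph $D$ on the vertex set $N$, adding one directed edge $i \to j$ labeled by good $g$ for every good $g$ with $g \in A_i$ and $g \in B_j$ (so $i \ne j$). The net out-degree of agent $i$ in $D$ equals $|A_i| - |B_i|$, so I can decompose the edge set of $D$ into directed cycles together with directed paths running from ``source'' agents (those with $|A_i| > |B_i|$) to ``sink'' agents (those with $|A_i| < |B_i|$). The welfare change $W(B) - W(A)$ is exactly the sum of the per-edge gains $v_j(g) - v_i(g)$, and hence equals the total gain over all cycles and paths in this decomposition.

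The point of the decomposition is that applying any single cycle to $A$ leaves every bundle size unchanged, since each agent on the cycle loses exactly one good and gains exactly one, so the result is again balanced. By maximality of $A$, every cycle therefore has non-positive gain, and consequently some source-to-sink path $P$ must have strictly positive gain. Applying such a path to $A$ decreases its source's bundle by one good and increases its sink's by one. If the source has size $\lceil m/n\rceil$ and the sink has size $\lfloor m/n\rfloor$ (the two directions in which balancedness leaves ``slack''), the resulting allocation is still balanced and has strictly larger welfare, contradicting the choice of $A$.

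The main obstacle is precisely the remaining case, where the positive-gain path's source already sits at the minimum size $\lfloor m/n\rfloor$ or its sink at the maximum size $\lceil m/n\rceil$, so that applying $P$ alone would violate balancedness; this is unavoidable when $n \mid m$, where every bundle size is forced to equal $m/n$ and no path can be applied at all. Here I would exploit strict positivity together with the market structure of a max-weight assignment: by LP duality, $A$ admits supporting good-prices $p_g$ and agent-levels $\mu_i$ with $v_i(g) \le p_g + \mu_i$ for all $i,g$ and equality on each $A_i$, so every agent holds only goods of maximal ``surplus'' $v_i(g)-p_g$. The plan is to pair the improving path $P$ with a compensating reverse transfer, routing one good back from the sink's side to the source's side so as to restore all bundle sizes while keeping the net gain strictly positive; strict positivity bounds the marginal cost of the compensating good, and the equal-surplus structure of $(p_g,\mu_i)$ shows the rebalanced exchange still strictly improves welfare, once more contradicting the optimality of $A$. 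This last step is exactly the content of the cited result of \citeauthor{CFS17}, which guarantees that a maximum-weight balanced assignment is Pareto optimal among all (even unbalanced) allocations under strictly positive valuations; invoking it as a black box and applying it to our max-weight balanced $A$ yields the theorem.
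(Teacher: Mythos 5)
You cannot compare this against an internal argument, because the paper imports this theorem from \cite{CFS17} without proof; judged on its own, your proposal fails at its foundation: a balanced allocation maximizing utilitarian welfare need \emph{not} be \poplus, so no repair of the ``hard case'' can rescue the strategy. Take $n=2$ and four goods $a,x,z,y$ with $v_1(a)=2$, $v_1(x)=v_1(z)=v_1(y)=1$ and $v_2(a)=100$, $v_2(x)=v_2(z)=51$, $v_2(y)=1000$ (all strictly positive). Every balanced allocation gives two goods to each agent, and the unique welfare maximizer is $A=(\set{x,z},\set{a,y})$ with utilities $(2,1100)$ and welfare $1102$ (the next best balanced allocation has welfare $1054$). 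Yet the unbalanced allocation $B=(\set{a},\set{x,z,y})$ yields utilities $(2,1102)$ and Pareto dominates $A$; perturbing to $v_1(a)=2.5$ even makes both agents strictly better off while leaving $A$ the unique welfare maximizer. This lands exactly in the case you flagged: $n \mid m$, the decomposition of the reallocation graph consists of one cycle $1\xrightarrow{x}2\xrightarrow{a}1$ with gain $-48$ and one source-to-sink path $1\xrightarrow{z}2$ with gain $50$, and every way of routing a good back to restore bundle sizes wipes out the gain --- consistent with $A$ being welfare-optimal among balanced allocations yet dominated from outside. Your appeal to strict positivity does not help here: it lower-bounds the source's value $v_s(h)$ for the compensating good $h$ but places no upper bound on the sink's value $v_t(h)$, which is the quantity that kills the rebalanced exchange.

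The second, independent problem is that your closing step is circular: you propose to finish by ``invoking as a black box'' the result of \citeauthor{CFS17}, which is precisely the statement to be proved. The cycle/path decomposition and the observation that cycles preserve balancedness are fine and do yield \poplus whenever the improving path can be applied (i.e., when its source sits at $\ceil{m/n}$ and its sink at $\floor{m/n}$), but a correct existence proof must select a different allocation altogether --- for instance, a maximizer of a weighted welfare $\sum_i \lambda_i v_i(A_i)$ whose \emph{unconstrained} maximizer is arranged (by choice of the weights $\lambda_i$, \`a la competitive equilibrium) to be balanced --- rather than the unweighted utilitarian optimum subject to balancedness.
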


This raises the interesting question, in the balanced case where all agents have strictly positive utilities, does there always exist an allocation that is EF1+\poplus? It turns out this is not true. As we show in the below theorem, it is not possible to guarantee any approximation of EF1 along with \poplus in the balanced case.

\begin{restatable}{theorem}{thmpoplus}\label{thm:poplus}
Under the balancedness constraint, it is impossible to guarantee $\alpha$-EF1 and \poplus for any $\alpha \in (0,1]$.
\end{restatable}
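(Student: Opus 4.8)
The plan is to exhibit, for each fixed $\alpha \in (0,1]$, a single balanced instance in which the \emph{unique} \poplus allocation fails $\alpha$-EF1. Take $n=2$ and $m=4$ goods $g_1,g_2,g_3,g_4$, so balancedness forces each agent to receive exactly two goods. Set $v_1(g_1)=v_1(g_2)=0$, $v_1(g_3)=v_1(g_4)=1$, and $v_2(g_1)=v_2(g_2)=\delta$, $v_2(g_3)=v_2(g_4)=L$, where $\delta>0$ is arbitrary (say $\delta=1$) and $L$ is chosen large enough that $\alpha L > 2\delta$. Every realized utility below will be strictly positive, so this also speaks to the positive-utility question raised just above the statement.

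The key structural observation I would use is that \poplus forbids agent $1$ from holding either $g_1$ or $g_2$: if agent $1$ holds $g_1$, then transferring $g_1$ to agent $2$ leaves agent $1$'s utility unchanged (she values it at $0$) while strictly increasing agent $2$'s utility (he values it at $\delta>0$), producing an unconstrained allocation that Pareto-dominates; the same applies to $g_2$. Since balancedness forces agent $1$ to hold exactly two goods, the only candidate \poplus allocation is $A_1=\{g_3,g_4\}$, $A_2=\{g_1,g_2\}$. First I would verify that this allocation is genuinely \poplus: its utility profile is $(2,2\delta)$, and any allocation $B$ with $v_1(B_1)\ge 2$ must give both $g_3,g_4$ to agent $1$ (her only positively-valued goods), forcing $B_2\subseteq\{g_1,g_2\}$ and hence $v_2(B_2)\le 2\delta$; so no Pareto improvement exists.

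Then I would check that this unique \poplus allocation violates $\alpha$-EF1 through the ordered pair $(2,1)$: agent $2$'s utility is $v_2(A_2)=2\delta$, whereas for either $g\in A_1=\{g_3,g_4\}$ we have $v_2(A_1\setminus\{g\})=L$, and $2\delta<\alpha L$ by the choice of $L$. Hence no single good can be removed from $A_1$ to restore $\alpha$-EF1 for agent $2$, so the allocation is not $\alpha$-EF1. As it is the only \poplus balanced allocation, no balanced allocation is simultaneously $\alpha$-EF1 and \poplus, which is exactly the claim.

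The main obstacle, and the step I would be most careful about, is establishing uniqueness of the \poplus balanced allocation, i.e., ruling out the ``fair diagonal'' allocations such as $A_1=\{g_1,g_3\}$, $A_2=\{g_2,g_4\}$, which in general tend to be both fair and Pareto optimal. Here the zero valuations are essential: because agent $1$ assigns zero value to $g_1,g_2$, any balanced allocation placing $g_1$ or $g_2$ with agent $1$ is Pareto-dominated simply by transferring that good to agent $2$ (an unbalanced move, which \poplus permits as a comparison since it ranges over $\mathcal{U}$), collapsing all the diagonal allocations and leaving $\{g_3,g_4\}\mid\{g_1,g_2\}$ as the sole survivor. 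I would finish by confirming that each of the finitely many balanced allocations other than this one is dominated in exactly this way, which is a short finite case check thanks to the two-type symmetry of the goods.
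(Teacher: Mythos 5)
Your construction is correct and it does prove the theorem as literally stated: in your four-good instance the unique \poplus balanced allocation is $A_1=\{g_3,g_4\}$, $A_2=\{g_1,g_2\}$, and agent $2$'s envy toward $A_1$ minus any one good is $L$ against a utility of $2\delta < \alpha L$, so no balanced allocation is simultaneously $\alpha$-EF1 and \poplus. Your route is genuinely different from, and much more elementary than, the paper's: the paper builds an instance with $2(k^2+k)$ goods for $k=\lceil 2/\alpha\rceil$, all valued strictly positively by both agents, and uses a counting argument --- \poplus forces agent $1$ to hold at most $k-1$ low-valued goods (else swapping $k$ low goods for one high good is a Pareto improvement that leaves agent $2$ indifferent), which then pins down the approximate composition of both bundles and yields the EF1 violation. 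Your argument instead collapses the set of \poplus balanced allocations to a single candidate in one step, at the cost of relying on zero valuations.

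That reliance is the one substantive caveat. The theorem is motivated, in the paragraph immediately preceding it, by the question of whether EF1 and \poplus are compatible \emph{when every agent values every good strictly positively} --- the regime in which the cited result of \cite{CFS17} guarantees that a \poplus balanced allocation exists at all. Your remark that ``every realized utility is strictly positive'' does not address this: the hypothesis concerns per-good valuations, not realized bundle utilities, and your agent $1$ values $g_1,g_2$ at zero. The uniqueness step is exactly where this matters: if you perturb $v_1(g_1)=v_1(g_2)$ to a small $\epsilon>0$, transferring $g_1$ to agent $2$ is no longer a Pareto improvement, the ``diagonal'' allocations are no longer eliminated, and the argument breaks. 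So your proof establishes the stated impossibility but not the stronger version (with strictly positive valuations) that the paper's larger construction delivers and that the surrounding discussion is really about.
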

\begin{proof}
    For contradiction, assume this is false, and for some $\alpha \in (0,1]$, every balanced instance admits an allocation that is \poplus and $\alpha$-EF1. 

    Let $k = \lceil 2 / \alpha \rceil$. Consider the following balanced-constrained instance, with $2$ agents and $m = 2(k^2 + k)$ goods. 
    The $m$ goods are broken up into two types, a set $L$ of $m/2$ low-valued goods, for which $v_1(g) = v_2(g) = 1$ for all $g \in L$, and a set $H$ of high-valued goods, for which $v_1(g) = k+1$ and $v_2(g) = k$ for each $g \in H$. Because this the allocation must be balanced, in every feasible instance both agents will receive exactly $k^2 + k$ goods.

    By our assumption, we know that there exists an allocation for this instance that is $\alpha$-EF1 and \poplus, call this allocation $A$. 

    We first note that any instance in which agent $1$ receives at least $k$ goods from $L$ will not be \poplus. To see this, consider what happens if agent $1$ has $k$ goods from $L$ and agent $2$ has $1$ good from $H$. It would be a Pareto improvement for the agents to swap these goods, as agent $1$'s utility would increase by $1$, and agent $2$'s utility would remain the same. 
    
    Also note that if agent $1$ has at most $k - 1$ goods from $L$, then due to the balancedness constraint that they must have $m/2 = k^2 + k$ total goods, agent $1$ must have at least $k^2 + 1$ goods from $H$. Since $A_2 = M \setminus A_1$, the inverse of this argument would be that agent $2$ has at most $k - 1$ goods from $H$ and at least $k^2 + 1$ goods from $L$.
    Therefore,
    for agent $2$, we have
    \begin{align*}
    v_2(A_1) &\ge (k^2 + 1) \cdot k + (k - 1) \cdot 1 > k^3 + k, 
    \\
    v_2(A_2) &\le (k - 1) \cdot k + (k^2 + 1) \cdot 1 \le 2k^2,
    \end{align*}
    Since $A$ is $\alpha$-EF1, we know the following statement must be true
    \[
    v_2(A_2) \geq \alpha \cdot ( v_2(A_1) - k).
    \]
    Substituting the bounds from above, we have
    \[
    2k^2 > \alpha k^3 \iff 2/\alpha > k,
    \]
    which contradicts our assumption that $k = \lceil {2}/{\alpha}\rceil$. Therefore, $A$ cannot be $\alpha$-EF1 and \poplus.
\end{proof}

\section{Best-of-Both-Worlds Guarantees}\label{sec:bobw}

Next, we present randomized allocations that achieve desirable fairness and efficiency guarantees both ex ante (i.e., in expectation) and ex post (i.e., in every allocation in the support of the probability distribution). \cite{AFSV24} initiated this line of work in (unconstrained) fair division, now referred to as ``best of both worlds'' (BoBW) guarantees. 

We denote the dot product of two vectors $\xx, \yy \in \bbR^d$ by $\xx \cdot \yy = \sum_{j \in [d]} x_j \cdot y_j$. A \emph{fractional} allocation is denoted by $\xx \in [0, 1]^{N \times M}$, where $x_{i, g}$ is the ``fraction'' of good $g$ allocated to agent $i$ and $\sum_{i \in N} x_{i,g} = 1$ for each $g \in M$. The utility of agent $i$ under a fractional allocation $A$ is $v_i(A_i) = \sum_{g \in M} x_{i,g} \cdot v_i(g)$. 

A randomized allocation $\xx = \sum_{\ell \in [L]} \lambda_\ell A^{\ell}$ is a probability distribution in which (integral) allocation $A^{\ell}$ is selected with probability $\lambda_\ell$ for each $\ell \in [L]$ and $\sum_{\ell \in [L]} \lambda_{\ell} = 1$. It induces a fractional allocation, also denoted $\xx$, in which $x_{i,g}$ is the marginal probability of agent $i$ being allocated good $g$. 

We say that a randomized allocation $\xx$ satisfies a desideratum $D$ \emph{ex ante} if its induced fractional allocation satisfies D,\footnote{We will only refer to envy-freeness (EF) and Pareto optimality (PO) of fractional allocations, defined exactly as they are for integral allocations. PO for a fractional allocation requires that not even a \emph{fractional} allocation Pareto dominate it.} and that $\xx$ satisfies a desideratum $D$ \emph{ex post} if every (integral) allocation in its support satisfies $D$. 

\subsection{EF+PO fractional allocations under linear constraints.} 

First, we define (a relevant special case of) the model of \cite{echenique2021constrained}. 

\begin{definition}[Linearly-Constrained Divisible Economy]\label{def:linear}
There is a set of agents $N = [n]$ and a set of divisible goods $M = [m]$, with each good $g$ having a \emph{supply} $q_g \in [1, n]$. For some $d \in \bbN$, there is a $d \times m$ matrix $A$ and a $d \times 1$ vector $b$, both with non-negative entries, such that, for a fractional allocation $\xx$, $\xx \in \allocs$ if and only if $A \xx_i \le b$ for each $i \in N$. 
\end{definition}

The supply is useful for modeling the case of goods with copies. Note that linear constraints can be used to capture goods with copies and/or balancedness (actually, even fractional versions of laminar matroids, but our BoBW results hold for only these two cases). For goods with copies, we set $q_g$ to be the number of copies of $g$ available, and demand $x_{i,g} \le 1, \forall g \in M$ for each $i \in N$, which is a set of linear constraints. To impose balancedness, we add $\sum_{g \in M} x_{i,g} \le \nicefrac{m}{n}$ for each $i \in N$; note that $\nicefrac{m}{n}$ does not need to be an integer and no lower bounds are required in this fractional case. 

\begin{definition}[Competitive Equilibrium From Equal Incomes (CEEI)]
\label{def:ceei}
For a fractional allocation $\xx$ and price vector $\pp \in \bbR^M_{\geq 0}$, $(\xx,\pp)$ is a competitive equilibrium from equal incomes (CEEI) if the following conditions hold:
\begin{enumerate}
    \item $x$ is feasible, i.e., $A x_i \le b$ for all $i \in N$.
    \item\label{cnd:cheapest-best-bundle} Each agent $i$ is allocated a bundle $\xx_i$ that maximizes her utility subject to a budget of 1, i.e.,
    \[
    \xx_i \in \arg\max_{\xx} \{v_i(\xx) \mid \pp \cdot \xx \leq 1\}.
    \]
    Further, $\xx_i$ should be the \emph{cheapest bundle} (minimizing $\pp \cdot \xx_i$) subject to achieving the maximum utility of $v_i(\xx_i)$.
    \item The market clears, i.e., $\sum_{i \in N} x_{i, g} \leq q_g$ for all $g$, and $\sum_{i \in N} x_{i, g} < q_g$ only if $p_g = 0$.
\end{enumerate}
\end{definition}

The following is a corollary of Theorem 1 of \cite{echenique2021constrained}.\footnote{Their stronger result holds for semi-strictly quasi-concave utilities, which subsumes strictly positive additive utilities.}

\begin{theorem}[\citeauthor{echenique2021constrained}~\citeyear{echenique2021constrained}]
\label{thm:ce-linear-constraints}
For a linearly-constrained divisible economy with strictly positive additive valuations,\footnote{That is, $v_i(g) > 0$ for all $i \in N$ and $g \in M$.} there always exists a fractional allocation $\xx$ and a price vector $\pp$ such that $(\xx,\pp)$ is CEEI. Further, in any such CEEI, $\xx$ is envy-free and Pareto optimal.
\end{theorem}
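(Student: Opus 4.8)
The plan is to separate the statement into its two halves---existence of a CEEI, and the envy-freeness/Pareto optimality of \emph{any} CEEI---and treat them independently, importing the hard existence step and proving the welfare properties directly from the three CEEI conditions. For existence I would simply instantiate Theorem~1 of \cite{echenique2021constrained}: each $v_i$ is additive with $v_i(g)>0$, hence linear and in particular semi-strictly quasi-concave (the footnoted hypothesis), while the common consumption constraints $A\xx_i \le b$ and supplies $q_g \in [1,n]$ match their model, and equal incomes correspond to normalizing every budget to $1$. This fixed-point argument is the deep ingredient and is the only piece I would not reprove; everything else reduces to textbook competitive-equilibrium reasoning.

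For envy-freeness I would exploit the equal-incomes structure. Fix agents $i,j$. Since the constraint pair $(A,b)$ is identical for all agents, $A\xx_j \le b$ (condition~1) shows $\xx_j$ lies in $i$'s feasible consumption set, and applying condition~\ref{cnd:cheapest-best-bundle} to $j$ gives $\pp \cdot \xx_j \le 1$, so $\xx_j$ is affordable to $i$ at budget~$1$ as well. Thus $\xx_j$ belongs to $i$'s choice set, and because $\xx_i$ maximizes $v_i$ over that set we obtain $v_i(\xx_i) \ge v_i(\xx_j)$, i.e.\ $i$ does not envy $j$.

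For Pareto optimality I would run the First Welfare Theorem, using the cheapest-bundle clause of condition~\ref{cnd:cheapest-best-bundle} in place of local non-satiation. Suppose a feasible fractional allocation $\yy$ Pareto dominates $\xx$. For each $i$ with $v_i(\yy_i) > v_i(\xx_i)$, feasibility of $\yy_i$ forces $\pp \cdot \yy_i > 1 \ge \pp \cdot \xx_i$ (otherwise $\yy_i$ would be a feasible, affordable bundle beating $\xx_i$); and at least one such agent exists by the definition of domination. For each $i$ with $v_i(\yy_i) = v_i(\xx_i)$, the cheapest-bundle condition forces $\pp \cdot \yy_i \ge \pp \cdot \xx_i$ (otherwise $\yy_i$ would be a strictly cheaper maximizer). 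Summing yields $\sum_i \pp \cdot \yy_i > \sum_i \pp \cdot \xx_i$. But market clearing gives $\sum_i \pp \cdot \xx_i = \sum_g p_g q_g$, since prices vanish on any good with slack, while feasibility of $\yy$ gives $\sum_i \pp \cdot \yy_i = \sum_g p_g \sum_i y_{i,g} \le \sum_g p_g q_g$, a contradiction; hence $\xx$ is PO.

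I expect the only delicate points to be bookkeeping rather than conceptual: reading the demand in condition~\ref{cnd:cheapest-best-bundle} as an optimum over bundles that are both \emph{feasible} ($A x \le b$) and affordable, so that the common constraint $(A,b)$ can legitimately be invoked for both the envy and the domination comparisons; and correctly using the complementary-slackness half of market clearing ($\sum_i x_{i,g} < q_g \Rightarrow p_g = 0$) to equate $\sum_i \pp \cdot \xx_i$ with $\sum_g p_g q_g$. Neither step needs anything beyond the CEEI conditions already stated, so once existence is cited from \cite{echenique2021constrained}, the remaining argument is a short and standard welfare analysis.
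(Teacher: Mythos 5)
Your proposal is correct, but it does more work than the paper, which offers no proof at all: Theorem~\ref{thm:ce-linear-constraints} is stated purely as a corollary of Theorem~1 of \cite{echenique2021constrained}, with existence, envy-freeness, and Pareto optimality all imported from that reference. You defer only the existence (fixed-point) step to the citation and then derive EF and PO directly from the three conditions of \Cref{def:ceei}, which is a legitimate and standard route: the equal-budget argument ($\xx_j$ is feasible for $i$ because the constraint pair $(A,b)$ is agent-independent, and affordable because $\pp\cdot\xx_j\le 1$, so $i$'s maximizer weakly beats it) gives EF, and the First Welfare Theorem computation --- strict improvers must overspend the unit budget, indifferent agents cannot underspend by the cheapest-bundle clause, and complementary slackness in the market-clearing condition equates total spending under $\xx$ with $\sum_g p_g q_g$ while feasibility of the dominating allocation caps its total cost by the same quantity --- gives PO. Both steps check out against the definition as written. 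The one point you rightly flag, that the $\arg\max$ in condition~\ref{cnd:cheapest-best-bundle} must be read as ranging over bundles that are feasible ($A\xx\le b$) as well as affordable, is indeed how \cite{echenique2021constrained} define demand; it matters for existence but, as you note, does not affect either welfare derivation since the comparison bundles $\xx_j$ and $\yy_i$ are feasible in any case. The benefit of your version is that the EF and PO claims become self-contained and verifiable from the paper's own CEEI definition rather than resting on a black-box citation; the cost is only length.
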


Our goal is to implement a fractional allocation $\xx$ that is part of a CEEI as a probability distribution over integral allocations that satisfy two relaxations of EF1 (satisfying them ex post): Prop1~\citep{CFS17} and \efoo~\citep{barman2019proximity}. 

\begin{definition}[Proportionality up to one good (Prop1)]
An integral allocation $A$ is \emph{proportional up to one good} (Prop1) if, for each agent $i \in N$, either $v_i(A_i) \geq \nicefrac{v_i(M)}{n}$ or there exists a good $g \notin A_i$ such that $v_i(A_i \cup \set{g}) \geq \nicefrac{v_i(M)}{n}$.
\end{definition}

\begin{definition}[Envy-freeness up to one good more-and-less (\efoo)]
An integral allocation $A$ is \emph{envy-free up to one good more-and-less} (\efoo) if, for every pair of agents $i, j \in N$ such that $A_j \ne \emptyset$ and $A_i \ne M$, $v_i(A_i \cup \set{g_i}) \geq v_i(A_j \setminus \set{g_j})$ for some $g_i \notin A_i$ and $g_j \in A_j$.
\end{definition}

We would also like our implementation of $\xx$ to be PO ex post. Luckily, this requirement is straightforward, as it is known that the implementation of any ex ante PO fractional allocation will be ex post PO.

\begin{proposition}[\citep{AFSV24} Proposition 1]
\label{prp:ex-ante-po-ex-post-po}
If a randomized allocation $\xx$ is ex ante PO, then $\xx$ is also ex post PO.
\end{proposition}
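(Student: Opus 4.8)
The plan is to argue by contraposition: I will show that if some integral allocation in the support of $\xx$ fails to be PO, then $\xx$ itself cannot be ex ante PO. Write the randomized allocation as $\xx = \sum_{\ell \in [L]} \lambda_\ell A^\ell$ with $\lambda_\ell > 0$ for each $\ell$ actually in the support, and suppose for contradiction that some component $A^{\ell^*}$ is not Pareto optimal. By the definition of (constrained) PO there is a feasible allocation $B \in \allocs$ that Pareto dominates $A^{\ell^*}$, i.e., $v_i(B_i) \ge v_i(A^{\ell^*}_i)$ for every $i \in N$, with at least one inequality strict.

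The central step is to splice $B$ in place of the dominated component and track the ex ante utilities. Define a new randomized allocation $\xx'$ that places probability $\lambda_{\ell^*}$ on $B$ and probability $\lambda_\ell$ on $A^\ell$ for each $\ell \neq \ell^*$. The key identity I would invoke is that, because valuations are additive, the ex ante (fractional) utility of each agent equals her expected integral utility: $v_i(\xx_i) = \sum_{\ell} \lambda_\ell\, v_i(A^\ell_i)$, and the analogous identity holds for $\xx'$. Comparing the two sums term by term, every summand is unchanged except the $\ell^*$ term, which weakly increases for every agent and strictly increases for the agent witnessing the domination; since $\lambda_{\ell^*} > 0$, this yields $v_i(\xx'_i) \ge v_i(\xx_i)$ for all $i$ with at least one strict inequality.

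It then remains to confirm that the induced fractional allocation of $\xx'$ is a legitimate competitor to $\xx$, i.e., that it is feasible. Here I would use that the feasible region of a linearly-constrained divisible economy is convex: $\xx'$ is a convex combination of the feasible integral allocations $B$ and $\set{A^\ell}_{\ell \neq \ell^*}$, so $A\xx'_i \le b$ holds for each $i$ and $\xx'$ is itself a feasible fractional allocation. Consequently $\xx'$ Pareto dominates $\xx$ among feasible fractional allocations, contradicting the ex ante Pareto optimality of $\xx$. I expect the only delicate point to be the bookkeeping in the utility comparison — specifically, making explicit that the \emph{strictness} of the ex ante improvement survives because the dominated allocation carries strictly positive probability $\lambda_{\ell^*}$ — whereas feasibility of the mixture is immediate from convexity and requires essentially no work.
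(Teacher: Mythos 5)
Your argument is correct and is the standard proof of this fact: replacing a non-PO allocation in the support by a dominating feasible allocation yields, via additivity and convexity of the linearly-constrained feasible region, a fractional allocation that Pareto dominates $\xx$, contradicting ex ante PO. The paper itself imports this proposition from \cite{AFSV24} without reproving it, and your reasoning matches the argument given there, including the one genuinely delicate point (strictness surviving because $\lambda_{\ell^*}>0$).
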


\subsection{Algorithm For Decomposing Constrained CEEIs}

\begin{algorithm}[t]
\caption{CE Lottery for goods-with-copies}\label{alg:bobw-copies}
\begin{algorithmic}[1]
\STATE $\xx \gets$ a competitive (fractional) allocation of \Cref{thm:ce-linear-constraints} due to \cite{echenique2021constrained}
\STATE For all $i \in N$ and $t \in [m]$, let $Q_{i, t} \gets \sum_{j \in [t]} x_{i, \pref_i(j)}$ be the total fraction of $i$'s top-$t$ goods allocated to her ($\pref_i(j)$ denotes agent $i$'s $j$-th most preferred good)
\STATE Define the bihierarchy set of constraints for an integral allocation $\yy$:
{\small
\begin{align*}
    \hierarchy_1 &= 
    \big\{\lfloor Q_{i, t} \rfloor \leq \sum\nolimits_{j \in [t]} y_{i, \pref_i(j)} \leq \lceil Q_{i, t} \rceil  \mid i \in N, t \in [m]
    \big\}, \\
    \hierarchy_2 &= 
    \big\{ \sum\nolimits_{i} y_{i, g} \leq q_{g} \mid g \in M 
    \big\} \cup \big\{ y_{i,g} \leq 1 \mid i \in N, g \in M\big\}.
\end{align*}
}
\STATE Decompose $\xx = \sum_{\ell = 1}^L \lambda_\ell \cdot \yy^\ell$ into $L$ integral allocations using the algorithm of \cite{BCKM13} (see their Appendix B) with input $\xx$ and $\hierarchy_1, \hierarchy_2$
\RETURN the randomized allocation $\sum_{\ell = 1}^L \lambda_\ell \cdot \yy^\ell$
\end{algorithmic}
\end{algorithm}

We decompose a fractional CEEI allocation using \Cref{alg:bobw-copies} to obtain the following.

\begin{restatable}{theorem}{bobw}\label{thm:bobw}
For goods with copies (resp., goods with copies with an additional balancedness constraint) and strictly positive valuations, there always exists a randomized allocation that is ex ante EF and PO, and ex post \efoo, Prop1, and PO (resp., Prop1 and PO).
\end{restatable}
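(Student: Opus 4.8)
The plan is to start from the fractional CEEI allocation $\xx$ guaranteed by \Cref{thm:ce-linear-constraints}, which is already ex ante EF and PO, and then argue that \Cref{alg:bobw-copies} decomposes it into a lottery over integral allocations that preserve the marginals (so the ex ante guarantees carry over) while forcing each integral allocation in the support to be feasible and to satisfy the claimed ex post properties. The ex ante PO immediately gives ex post PO via \Cref{prp:ex-ante-po-ex-post-po}, so the substance is entirely in (i) showing the decomposition exists and respects the constraint sets $\hierarchy_1, \hierarchy_2$, and (ii) showing that any integral allocation consistent with $\hierarchy_1$ (whose marginals match $\xx$) is \efoo and Prop1, and additionally balanced when the balancedness constraint is imposed.

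First I would verify that $\hierarchy_1 \cup \hierarchy_2$ forms a \emph{bihierarchy} in the sense required by the rounding theorem of \cite{BCKM13}: $\hierarchy_1$ is a laminar family because for each fixed $i$ the prefix sets $\set{\pref_i(1),\dots,\pref_i(t)}$ are nested, and across different agents they are disjoint; $\hierarchy_2$ is laminar because the per-good supply constraints are disjoint singletons-of-columns and the $y_{i,g}\le 1$ box constraints sit inside them. Since every constraint family here is a laminar family of \emph{sum} constraints with integral bounds sandwiching the fractional value, \cite{BCKM13}'s theorem guarantees a decomposition $\xx = \sum_\ell \lambda_\ell \yy^\ell$ into integral allocations each satisfying all the floor/ceiling bounds of $\hierarchy_1$ and the feasibility bounds of $\hierarchy_2$. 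The $\hierarchy_2$ constraints are exactly the goods-with-copies feasibility constraints (supply $q_g$ per good, at most one copy per agent), so every $\yy^\ell$ in the support is feasible; and for the balancedness variant I would add the prefix constraint for $t=m$, i.e.\ $\lfloor m/n\rfloor \le \sum_g y_{i,g} \le \lceil m/n\rceil$, which is consistent with the $\sum_g x_{i,g}\le m/n$ built into the CEEI and keeps $\hierarchy_1$ laminar.

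Next I would derive the ex post fairness guarantees from the $\hierarchy_1$ constraints. The key point is that $\hierarchy_1$ pins each integral allocation's \emph{top-$t$ prefix mass} to within one of the fractional $Q_{i,t}$ for every $t$; this is precisely the stochastic-dominance-preserving rounding. Because $\xx$ is ex ante envy-free, for any pair $i,j$ we have $v_i(\xx_i)\ge v_i(\xx_j)$, and the prefix-preservation means $\yy^\ell_i$ first-order stochastically dominates a ``one-good-less'' truncation of the fractional $\xx_i$ while $\yy^\ell_j$ is dominated by a ``one-good-more'' inflation of $\xx_j$ (in agent $i$'s ordering). Converting ex ante EF plus these $\pm 1$ prefix slacks into the two-sided \efoo inequality $v_i(A_i\cup\set{g_i})\ge v_i(A_j\setminus\set{g_j})$ is the standard argument of \cite{barman2019proximity}/\cite{AFSV24}; similarly, ex ante proportionality (implied by ex ante EF) plus the one-good prefix slack yields Prop1. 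I would write these out as two short lemmas on what a prefix-rounding of a dominant fractional allocation inherits.

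The main obstacle I expect is twofold. First, in the balancedness variant the statement drops \efoo, so I must be careful \emph{not} to claim it there; the reason is that adding the $t=m$ balancedness prefix constraint can conflict with the finer \efoo-enabling slack (an agent capped at $\lfloor m/n\rfloor$ goods cannot always absorb the extra ``one good more''), whereas Prop1 only needs a single additional good and survives. I would make explicit which prefix constraints are active in each case and check that Prop1's one-good relaxation is always feasible under balancedness while \efoo's may not be. Second, I must confirm that \cite{BCKM13}'s decomposition genuinely tolerates the \emph{simultaneous} bihierarchy $\hierarchy_1\cup\hierarchy_2$ rather than a single laminar family — their result permits exactly two laminar families (hence ``bi-hierarchy''), so the crux is exhibiting $\hierarchy_1$ and $\hierarchy_2$ as the two hierarchies and verifying each is individually laminar, which I checked above; the delicate bookkeeping is ensuring the balancedness constraint is folded into $\hierarchy_1$ (the per-agent hierarchy) rather than creating a spurious third family that would break the theorem's hypotheses.
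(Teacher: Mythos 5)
Your overall architecture matches the paper's: CEEI from \Cref{thm:ce-linear-constraints} for ex ante EF+PO, \Cref{prp:ex-ante-po-ex-post-po} for ex post PO, the bihierarchy rounding of \cite{BCKM13} with $\hierarchy_1$ (per-agent preference prefixes) and $\hierarchy_2$ (per-good supply and per-agent copy caps) to get feasible integral allocations, and the utility guarantee of \cite{AFSV24} for Prop1. However, there is a genuine gap in your \efoo step. You claim that prefix preservation lets you conclude that $\yy^\ell_j$ is dominated by a ``one-good-more'' inflation of $\xx_j$ \emph{in agent $i$'s ordering}, and that ex ante EF then closes the argument. The rounding constraints in $\hierarchy_1$ control agent $j$'s integral bundle only with respect to agent $j$'s own preference order; they give you $v_j(A_j \setminus \{g_j\}) < v_j(\xx_j)$ for some $g_j$, but say nothing about how agent $i$ values $A_j$ relative to $\xx_j$. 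Ex ante EF compares $v_i(\xx_i)$ to $v_i(\xx_j)$, so you are missing the bridge from $j$'s valuation of the rounded bundle to $i$'s. The ``standard argument'' you cite supplies this bridge via market prices: one shows $\pp(A_j \setminus \{g_j\}) < \pp(\xx_j) \le 1$, so the bundle is affordable, and then the CEEI best-bundle condition for agent $i$ gives $v_i(A_j \setminus \{g_j\}) \le v_i(\xx_i)$. Converting the value inequality $v_j(A_j \setminus \{g_j\}) < v_j(\xx_j)$ into the price inequality requires that all goods partially allocated to $j$ share a common bang-per-buck ratio. That property is automatic for unconstrained fractional MNW/CEEI but is \emph{not} automatic under constraints; re-establishing it for goods with copies is precisely the paper's \Cref{lem:bb-relation}, and it is the piece your proposal omits.

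Relatedly, your explanation for why \efoo is dropped in the balancedness variant is off the mark. \efoo is a hypothetical comparison, so there is no issue with an agent being ``unable to absorb'' an extra good; the real obstruction is that the proof of the bang-per-buck relation performs a spending transfer between two goods that may produce an unbalanced bundle, so the equal-bang-per-buck structure of the CEEI can fail once balancedness is imposed, and the price bridge above breaks. Your treatment of the bihierarchy structure, feasibility of the support, the $t=m$ prefix constraint for balancedness, and the Prop1 derivation are all consistent with the paper.
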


Before proving \Cref{thm:bobw}, we will introduce several helpful tools, the first of which being the bihierarchy matrix decomposition framework of \citep{BCKM13}, which \Cref{alg:bobw-copies} (closely following the algorithm of \citep{AFSV24}) uses as a black-box.

The bihierarchy theorem states how a fractional allocation can be decomposed into a convex combinations of integral allocations in such a way that each integral allocation adhears to a set of constraints. In the language of \citep{BCKM13}, a constraint over an allocation $X$ takes the form of $\ubar{q}_S \le \sum_{(i,r) \in S}{X_{i,r}} \le \bar{q}_S$, where $S$ is a set containing agent-object pairs, and $\ubar{q}_S, \bar{q}_S \in \mathbb{N}$ are upper and lower quotas for the sum of the elements of $S$. An allocation $X$ satisfies a constraint $S$ if the statement $\ubar{q}_S \le \sum_{(i,r) \in S}{X_{i,r}} \le \bar{q}_S$ is true. A constraint structure $\mathcal{H}$ is a collection of constraints, and some allocation $X$ satisfies a constraint structure $\mathcal{H}$ if it satisfies all constraints in $\mathcal{H}$. The bihierarchy theorem is specifically for collections of constraints that form a bihierarchy structure.

\begin{definition}[Hierarchy and Bihierarchy \citep{BCKM13}]
    A constraint structure $\mathcal{H}$ is a hierarchy iff for any pair of sets $S, S' \in \mathcal{H}$, we have that $S \cap S' \in \set{\emptyset,S,S'}$. A constraint structure $\mathcal{H}$ is a bihierarchy iff there exists two hierarchies $\mathcal{H}_1, \mathcal{H}_2$ such that $\mathcal{H} = \mathcal{H}_1 \cup \mathcal{H}_2$ and $\mathcal{H}_1 \cap \mathcal{H}_2 = \emptyset$.
\end{definition}

\begin{theorem}[Bihierarchy Theorem \citep{BCKM13}]\label{thm:bihierarchy}
    For any fractional allocation $\xx$ and bihierarchy $\mathcal{H}$, if $\xx$ satisfies $\mathcal{H}$, then it can be represented as a randomized allocation $\xx = \sum_{\ell \in [L]} \lambda_\ell A^{\ell}$, where for each $\ell \in [L]$, $A^{\ell}$ satisfies $\mathcal{H}$. This randomized allocation can be found in strongly polynomial time.
\end{theorem}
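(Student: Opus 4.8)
The plan is to prove the statement by showing that the feasibility polytope cut out by a bihierarchy has only integral vertices, and then deriving both the convex decomposition (via Carath\'{e}odory) and the strong-polynomiality claim (via a constructive rounding procedure). Concretely, I would work with
\[
P = \Bigl\{ \yy \in [0,1]^{N \times M} : \ubar{q}_S \le \textstyle\sum_{(i,g) \in S} y_{i,g} \le \bar{q}_S \ \text{ for all } S \in \mathcal{H} \Bigr\}.
\]
Since $\xx$ satisfies $\mathcal{H}$ and the box constraints, $\xx \in P$, and $P$ is a bounded polytope, so it equals the convex hull of its vertices. Hence if every vertex of $P$ is integral, then $\xx = \sum_{\ell} \lambda_\ell A^\ell$ is a convex combination of integral points of $P$, and each such $A^\ell$ automatically satisfies $\mathcal{H}$. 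This reduces the entire theorem to proving integrality of the vertices of $P$.

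For the integrality argument, let $y$ be a vertex of $P$ and suppose toward contradiction that the set of fractional cells $E = \{(i,g) : 0 < y_{i,g} < 1\}$ is nonempty. I would exhibit a nonzero direction $\delta$ supported on $E$ with $y \pm \epsilon \delta \in P$ for all small $\epsilon > 0$, contradicting that $y$ is a vertex. Because the box bounds and every non-tight quota constraint are slack on the cells of $E$, it suffices that $\delta$ satisfy the homogeneous system $\sum_{(i,g) \in S \cap E} \delta_{i,g} = 0$ for every constraint $S \in \mathcal{H}$ that is tight at $y$ (at its lower or upper quota). A nonzero solution exists as soon as $|E|$ exceeds the rank of this tight system, so the whole matter comes down to a counting bound on the independent tight constraints.

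To get that bound I would exploit the laminar (forest) structure of each hierarchy. Split the tight constraints into $T_1 \subseteq \mathcal{H}_1$ and $T_2 \subseteq \mathcal{H}_2$, and build a bipartite multigraph $B$ whose left nodes are the inclusion-minimal tight sets of $\mathcal{H}_1$ meeting $E$ (with a root for cells in no tight $\mathcal{H}_1$-set), whose right nodes are the analogous sets of $\mathcal{H}_2$, and in which each cell of $E$ is an edge joining the minimal tight $\mathcal{H}_1$-set and the minimal tight $\mathcal{H}_2$-set containing it. Within a single laminar family, subtracting each tight set's children turns nested tight constraints into node-disjoint equations, so $T_1$ (resp.\ $T_2$) contributes at most one independent equation per node of $B$; the combined rank is therefore at most $|V(B)| - 1$. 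Since $B$ has exactly $|E|$ edges, if $|E| \ge |V(B)|$ then $B$ contains a cycle, and assigning $\delta = \pm 1$ alternately along the edges of the cycle gives a direction balanced at every node, hence annihilating every tight constraint — the desired contradiction. Thus $E = \emptyset$ and $y$ is integral.

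For the strong-polynomiality claim I would convert this existence proof into iterative rounding: repeatedly find such a cycle (or an augmenting path terminating at cells slack on one side) in the support graph $B$ by a shortest-path-type search, push $\epsilon$ in the feasible direction until a coordinate reaches $\{0,1\}$ or a further constraint becomes tight, and peel off the resulting allocation with the appropriate weight $\lambda_\ell$; each iteration strictly reduces the number of fractional cells, so the full decomposition $\sum_\ell \lambda_\ell A^\ell$ is produced in strongly polynomial time. The main obstacle is precisely the rank bound of the previous paragraph: the single-hierarchy case is immediate since a laminar family's tight constraints collapse to node-disjoint equations, but controlling the \emph{interaction} of the two laminar forests is exactly where the bihierarchy hypothesis is essential, and the bipartite-graph/Euler-cycle accounting is the delicate step — it is also the reason the result breaks down once three or more hierarchies are allowed.
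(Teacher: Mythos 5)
You should first note a mismatch of framing: the paper does not prove this statement at all --- it is imported verbatim from \citet{BCKM13} and used as a black box (only the decomposition algorithm of their Appendix~B is invoked inside \Cref{alg:bobw-copies}). So there is no in-paper proof to compare against; the relevant benchmark is the known proof, and your sketch is in fact a reconstruction of essentially that argument: a generalized Birkhoff--von Neumann scheme showing the bihierarchy polytope has integral vertices, with cycle-canceling over a bipartite support graph whose two sides are the two laminar forests, and an iterative-rounding implementation for strong polynomiality. The overall route, including the reduction of nested tight constraints to inclusion-minimal tight sets and the observation that the argument is exactly what breaks for three hierarchies, is the right one.

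There is, however, a genuine gap in your vertex-integrality step: you never use the integrality of the quotas $\ubar{q}_S, \bar{q}_S \in \mathbb{N}$, and without it the claim is false, so some step of your counting must be unjustified --- and it is. You argue ``rank of the tight system is at most $|V(B)| - 1$, and if $|E| \ge |V(B)|$ then $B$ has a cycle,'' but nothing shows that a fractional vertex must satisfy $|E| \ge |V(B)|$; the case where $B$ is a forest (e.g., a single fractional cell lying in one tight set per hierarchy, giving a one-edge tree) is left entirely open, and your rank bound cannot exclude it. The standard repair is precisely where integer quotas enter: at an inclusion-minimal tight set $T$, the integrally-allocated cells of $T$ contribute an integer, so $\sum_{c \in T \cap E} y_c$ is an integer, which forces $|T \cap E| \ge 2$; hence every non-root node of $B$ has degree at least $2$. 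In a forest this forces every leaf to be a root node, and alternating $\pm 1$ along a path between two root leaves (whose endpoints carry no tight equation) yields the perturbation --- this is exactly the ``path terminating at cells slack on one side'' device that you mention, but only in your algorithmic paragraph; it is already needed in the integrality proof itself. Separately, your asserted ``$|V(B)|-1$'' rank bound presumes a global dependency between the two hierarchies' node equations (both summing to the total over $E$) that holds only when no fractional cell attaches to a root node; with roots present it can fail, though once the degree-$\ge 2$ argument is in place you do not need the ``$-1$'' at all. With these repairs your proof goes through and matches the cited result, including the strongly polynomial decomposition.
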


We will also introduce the Utility++ guarantee of \citep{AFSV24}, which is what we use to ensure that the integral allocations produced by \Cref{alg:bobw-copies} have the desired fairness properties.

Through the bihierarchy presented in \Cref{alg:bobw-copies},  \cite{AFSV24} prove a utility guarantee that we restate below. Recall the definition of the two hierarchies $\hierarchy_1 \cup \hierarchy_2$,
\begin{align*}
    \hierarchy_1 &= 
    \left\{\lfloor Q_{i, t} \rfloor \leq \sum\nolimits_{j \in [t]} y_{i, \pref_i(j)} \leq \lceil Q_{i, t} \rceil  \mid i \in N, t \in [m]
    \right\}, \\
    \hierarchy_2 &= 
    \left\{ \sum\nolimits_{i} y_{i, g} \leq q_{g} \mid g \in M 
    \right\} \cup \left\{ y_{i,g} \leq 1 \mid i \in N, g \in M\right\},
\end{align*}
where $\pref_i(j)$ is the $j$th good in the preference ranking of $i$ (ties broken arbitrarily) and 
$Q_{i, t} = \sum_{j \in [t]} x_{i, \pref_i(j)}$ is the total fraction of goods allocated to $i$ in $x$ among her top $t$ goods. The first hierarchy for $i$ enforces that the number of goods $i$ receives among her first top $t$ goods in the integral allocations , i.e., $\sum_{j \in [t]} y_{i, \pref_i(j)}$, must be as close as possible to $Q_{i, t}$ (the total fraction of $i$'s top $t$ goods allocated to her in $x$). The second hierarchy $\hierarchy_2$ ensures that the integral allocations adhear to goods with copies constraints, where each agent recieves no more than a single copy of each good, and no more than $q_g$ total copies of any good $g \in M$ are allocated. By invoking \Cref{thm:bihierarchy} with the above bihierarchy set, we have the following utility guarantee.

\begin{lemma}[Utility Guarantee++ \citep{AFSV24}]
\label{lem:util-guarantee}
For a fractional allocation $\xx$ and the bihierarchy set of constraints in \Cref{alg:bobw-copies}, there exists a randomized allocation implementing $\xx = \sum_{\ell \in [L]} \lambda_\ell A^{\ell}$ such that for all integral allocations $A^{\ell}$ the following holds:
\begin{enumerate}
    \item If $v_i(A^{\ell}_i) < v_i(\xx_i)$, then there exists a good $g$ such that $x_{i,g} > 0$ and $v_i(A^{\ell}_i \cup \{g\}) > v_i(\xx_i)$.
    \item If $v_i(A^{\ell}_i) > v_i(\xx_i)$, then there exists a good $g$ such that $x_{i, g} < 1$ and $v_i(A^{\ell}_i \setminus \{g\}) < v_i(\xx_i)$.
\end{enumerate}
\end{lemma}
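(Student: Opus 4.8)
The plan is to derive the lemma from the bihierarchy theorem (\Cref{thm:bihierarchy}) together with a layer-cake analysis of each integral allocation in the support. First I would apply \Cref{thm:bihierarchy} to the fractional allocation $\xx$ and the bihierarchy $\hierarchy_1 \cup \hierarchy_2$ from \Cref{alg:bobw-copies}: since $\xx$ visibly satisfies every constraint (the prefix sums $\sum_{j\in[t]} x_{i,\pref_i(j)} = Q_{i,t}$ lie in $[\lfloor Q_{i,t}\rfloor,\lceil Q_{i,t}\rceil]$, and $\xx$ respects the goods-with-copies caps), the theorem yields a decomposition $\xx = \sum_\ell \lambda_\ell A^\ell$ in which every $A^\ell$ satisfies $\hierarchy_1 \cup \hierarchy_2$. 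The only subtlety in checking the bihierarchy property is that the $t=1$ constraints of $\hierarchy_1$ and the cell constraints $\{(i,g)\}$ of $\hierarchy_2$ share singletons; this is resolved by assigning each shared singleton to a single hierarchy, after which $\hierarchy_1$ (nested prefixes per agent, disjoint across agents) and $\hierarchy_2$ (columns per good with nested cells, disjoint across goods) are genuine disjoint hierarchies. From here the statement reduces to proving, for a fixed $A^\ell$ and a fixed agent $i$, the two per-agent utility implications.

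The key reformulation I would use is a layer-cake identity. Fix $i$ and abbreviate $y_g = A^\ell_{i,g}\in\{0,1\}$. Writing $G_x(\theta) = \sum_{g: v_i(g) > \theta} x_{i,g}$ and $G_y(\theta) = \sum_{g: v_i(g) > \theta} y_g$, additivity gives $v_i(\xx_i) = \int_0^\infty G_x(\theta)\,d\theta$ and $v_i(A^\ell_i) = \int_0^\infty G_y(\theta)\,d\theta$. The crucial observation is that $\{g : v_i(g) > \theta\}$ is always a prefix of $i$'s preference order, so $G_x(\theta) = Q_{i,t(\theta)}$ and $G_y(\theta)$ equals the integer count of top-$t(\theta)$ goods that $i$ receives in $A^\ell$, where $t(\theta) = |\{g: v_i(g)>\theta\}|$. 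The constraint $\hierarchy_1$ then reads exactly as $\lfloor G_x(\theta)\rfloor \le G_y(\theta) \le \lceil G_x(\theta)\rceil$, so that $|G_x(\theta) - G_y(\theta)| < 1$ pointwise, being $0$ whenever $G_x(\theta)\in\mathbb{Z}$.

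For Condition 1 ($v_i(A^\ell_i) < v_i(\xx_i)$) I would let $g$ be the most preferred good with $x_{i,g} > 0$ and $y_g = 0$; such a good exists, since otherwise $A^\ell_i \supseteq \{g: x_{i,g}>0\}$ would force $v_i(A^\ell_i) \ge v_i(\xx_i)$. Setting $\theta_0 = v_i(g)$, the maximality of $g$ ensures that every good valued strictly above $\theta_0$ with $x_{i,\cdot}>0$ is received integrally, whence $G_x(\theta) \le G_y(\theta)$ for all $\theta > \theta_0$ and $\int_{\theta_0}^\infty (G_x - G_y)\,d\theta \le 0$. Combining this with the pointwise bound $G_x - G_y < 1$ on $[0,\theta_0]$ gives $v_i(\xx_i) - v_i(A^\ell_i) \le \int_0^{\theta_0}(G_x - G_y)\,d\theta < \theta_0 = v_i(g)$, i.e.\ $v_i(A^\ell_i \cup \{g\}) > v_i(\xx_i)$; the strictness comes from $1-(G_x-G_y)$ being positive on the whole interval $[0,\theta_0]$ of positive length (the hypothesis forces $\theta_0>0$). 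Condition 2 is the mirror image: I would take $g$ to be the most preferred good with $y_g = 1$ and $x_{i,g} < 1$ (existence again by contradiction), and the same splitting at $\theta_0 = v_i(g)$ now yields $G_y(\theta) \le G_x(\theta)$ for $\theta > \theta_0$ and hence $v_i(A^\ell_i \setminus \{g\}) = v_i(A^\ell_i) - v_i(g) < v_i(\xx_i)$.

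The main obstacle, and the part deserving the most care, is the sign analysis above the threshold: one must choose the added (resp.\ removed) good to be the \emph{most preferred} good satisfying the relevant under-held (resp.\ over-held) condition, so that maximality forces $G_x \le G_y$ (resp.\ $G_y \le G_x$) on $(\theta_0,\infty)$; any other choice breaks the bound. Everything else — the layer-cake identity, the translation of $\hierarchy_1$ into the pointwise gap bound, and the strictness of the final inequality — is routine once this choice is fixed, and ties in the valuations cause no trouble because the region $\{g': v_i(g') > \theta\}$ for $\theta > \theta_0$ only ever contains goods strictly preferred to $g$.
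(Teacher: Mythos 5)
Your proof is correct, so the main thing to flag is that there is no in-paper proof to compare against: the paper restates this lemma from \cite{AFSV24}, and its only justification is the one-line invocation of \Cref{thm:bihierarchy} on the bihierarchy of \Cref{alg:bobw-copies}. Your argument is a faithful self-contained reconstruction of the original AFSV24 proof: your layer-cake identity $v_i(\xx_i)=\int_0^\infty G_x(\theta)\,d\theta$ is the continuous form of the Abel summation over preference prefixes used there, and you correctly observe that, because $\set{g : v_i(g)>\theta}$ is always a prefix of $i$'s value order, the $\hierarchy_1$ constraints translate exactly into the pointwise rounding bound $\lfloor G_x(\theta)\rfloor \le G_y(\theta)\le \lceil G_x(\theta)\rceil$. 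You also identify and resolve the two spots where a blind reconstruction typically fails: (i) the choice of $g$ as the \emph{most preferred} good with $x_{i,g}>0, y_g=0$ (resp.\ $y_g=1, x_{i,g}<1$), which is what forces $G_x\le G_y$ (resp.\ $G_y\le G_x$) on $(\theta_0,\infty)$, with ties harmless since only strictly higher-valued goods appear above the threshold; and (ii) the overlap of the singleton sets between $\hierarchy_1$ at $t=1$ and the caps $y_{i,g}\le 1$ in $\hierarchy_2$, which deduplication handles since the $t=1$ prefix constraint already implies the cap. Your strictness bookkeeping is also sound: the existence arguments for $g$ force $\theta_0=v_i(g)>0$ under each hypothesis, and $G_x-G_y$ is piecewise constant with supremum strictly below $1$ on $[0,\theta_0]$, so the integral is strictly below $\theta_0$, yielding the strict inequalities in both conditions.
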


Finally, we introduce the concept of ``Bang-per-Buck'' in a CEEI. For an allocation $x$ and price vector $p$ such that $(x, p)$ is a CEEI, the \emph{bang-per-buck} ratio of good $g$ for agent $i$ is $\bb_i(g) = v_i(g) / p_g$. Bang-per-buck ratios have been found to be very useful when working with CEEIs of unconstrained instances with additive valuations, as each agent's CEEI bundle will contain only goods with the maximum bang-per-buck for that agent (see \citep{BKV18} for example). By the following lemma, we can show that even under goods with copies constraints, examining the bang-per-buck of each good will still allow us to give a similar structure to a CEEI solution.

\begin{lemma}
\label{lem:bb-relation}
In the case of goods with copies, given an allocation $\xx$ and price vector $\pp$ such that $(\xx, \pp)$ is CEEI, for any agent $i$, a good $g_1$ that is fully allocated to $i$ (i.e., $x_{i, g_1} = 1$), goods $g_2$ and $g_3$ that are partially allocated to $i$ (i.e., $x_{i, g_2}, x_{i, g_3} \in (0, 1)$), and $g_4$ that is not allocated to $i$ (i.e., $x_{i, g_4} = 0$), the following relation of bang-per-buck ratios holds:
\[
\bb_i(g_1) \ge \bb_i(g_2) = \bb_i(g_3) \ge \bb_i(g_4).
\]
\end{lemma}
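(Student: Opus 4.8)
The plan is to read the three relations straight off the optimality of agent $i$'s demand bundle. By the CEEI conditions (\Cref{def:ceei}), $\xx_i$ maximizes $v_i(\cdot)$ over all bundles that are feasible for the goods-with-copies economy — in particular that respect the box constraints $0 \le x_{i,g} \le 1$ — and cost at most the budget of $1$; that is, $\xx_i$ solves $\max\set{v_i(z) : 0 \le z \le \mathbf{1},\ \pp \cdot z \le 1}$. The lemma is precisely the complementary-slackness/KKT description of the optimum of this ``fractional knapsack with upper bounds'': goods at their upper bound have bang-per-buck at least the shadow price $\mu$ of the budget, interior goods have bang-per-buck exactly $\mu$, and goods at their lower bound have bang-per-buck at most $\mu$. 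Rather than invoke KKT machinery, I would establish the three comparisons by elementary budget-neutral exchange arguments, which keeps everything self-contained.

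First I would dispose of the degenerate zero-price cases. Since valuations are strictly positive, any good $g$ with $p_g = 0$ must be fully allocated to $i$: raising such a good to its upper bound costs nothing and strictly increases utility, so no utility-maximizing feasible bundle can leave it unfilled. Hence both partially allocated goods $g_2, g_3$ and the unallocated good $g_4$ have strictly positive prices, so $\bb_i(g_2), \bb_i(g_3), \bb_i(g_4)$ are well defined; and if $p_{g_1} = 0$ then $\bb_i(g_1) = +\infty$ dominates everything and that inequality is immediate. This lets me assume a positive price on every good I need to divide by.

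For the remaining comparisons I perturb $\xx_i$ by a tiny budget-neutral transfer and contradict optimality. For $\bb_i(g_1) \ge \bb_i(g_2)$: if it failed, shift a small cost $\epsilon$ off $g_1$ (feasible since $x_{i,g_1} = 1 > 0$) onto $g_2$ (feasible since $x_{i,g_2} < 1$), i.e.\ decrease $x_{i,g_1}$ by $\epsilon / p_{g_1}$ and increase $x_{i,g_2}$ by $\epsilon / p_{g_2}$; total spending is unchanged while utility changes by $\epsilon\,(\bb_i(g_2) - \bb_i(g_1)) > 0$, a contradiction. The equality $\bb_i(g_2) = \bb_i(g_3)$ follows by the same transfer run in whichever direction a strict inequality would hold, using that both goods are interior and so can be moved both up and down. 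Finally $\bb_i(g_2) \ge \bb_i(g_4)$ follows by transferring budget from $g_2$ (with $x_{i,g_2} > 0$) to $g_4$ (with $x_{i,g_4} = 0 < 1$). In each case a sufficiently small $\epsilon$ preserves both $0 \le x \le 1$ and the budget, so the perturbed bundle is feasible and strictly better, contradicting the utility-maximization condition of CEEI.

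The exchanges themselves are routine; the only things to be careful about are ensuring that every good I divide by has a positive price and that each perturbation stays inside $[0,1]$. The positive-price guarantee for the partially allocated and the unallocated goods is exactly what the strict positivity of valuations buys, and the box feasibility is what produces the ``equal bang-per-buck'' plateau among interior goods — the distinctive feature relative to the unconstrained Fisher market, where one instead sees a single marginal good.
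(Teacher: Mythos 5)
Your proof is correct and follows essentially the same route as the paper's: a budget-neutral transfer of spending from a good with positive allocation to a good below its cap, contradicting the utility-maximization condition of the CEEI. Your extra care with zero-priced goods (showing they must be fully allocated under strictly positive valuations, so all divisions by price are legitimate) is a small tightening of a step the paper's proof leaves implicit, but the core argument is identical.
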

\begin{proof}
Assume by contradiction that for a pair of goods $g, g' \in \{g_1, \ldots, g_4\}$ the claim does not hold, i.e., $\bb_i(g) < \bb_i(g')$ and $\xx_{i, g} > 0$. It holds that $g$ is either partially or fully allocated to $i$, while $g'$ is either partially or not at all allocated to $i$. 
Let $\delta_{g'} \in (0, 1)$ be some positive value such that $\delta_{g'} \le 1 - x_{i, g'}$ and $\delta_{g} \cdot p_{g'} / p_{g} \le x_{i, g}$. Take a bundle $\xx'_i$ such that $x'_{i, g'} = x_{i, g'} + \delta_{g'}$, $x'_{i, g} = x_{i, g} - \delta_{g'} \cdot p_{g'} / p_g$, and $x'_{i, g''} = x_{i, g''}$ for all $g'' \notin \{g, g'\}$. That is, agent $i$ ``spends'' a nonzero part of their budget on $g'$ instead of $g$. The inequalities ensure that such a transfer is feasible. Then, since $\bb_i(g') > \bb_i(g)$ and the rest of the spending of agent $i$ in $\xx'_i$ is the same as in $\xx_i$, we have $v_i(\xx'_i) > v_i(\xx_i)$, which contradicts condition \ref{cnd:cheapest-best-bundle} of \Cref{def:ceei}.
\end{proof}

Note that the proof above does not necessarily hold when a balancedness constraint is also enforced on top of the goods with copies constraint. For instance, the transfer made in the proof may lead to an unbalanced bundle $\xx'_i$ which is inadmissible.

With these definitions in mind, we are ready to prove \Cref{thm:bobw}.

\begin{proof}[Proof of \Cref{thm:bobw}]
Let $A^1, \ldots, A^{\ell}$ be the set of allocations returned by \Cref{alg:bobw-copies} based on the fractional allocation $\xx$ and price vector $\pp$ such that $(\xx, \pp)$ is CEEI. The ex-ante PO guarantee follows from \Cref{thm:ce-linear-constraints}, whose implementation is also ex-post PO by \Cref{prp:ex-ante-po-ex-post-po}.

First, note that the bihierarchy constraints $\hierarchy_1$ and $\hierarchy_2$ will ensure that the goods with copies (resp. balancedness) constraints will hold for each $A$ in the support of $\xx$. Any fractional goods with copies allocations will clearly adhere to the constraints in $\hierarchy_2$, and $\hierarchy_2$ will enforce that each integral allocation in the support will meet the goods with copies constraint. The constraints $\{ \sum\nolimits_{i} y_{i, g} \leq q_{g} \mid g \in M\}$ will ensure that no more than $q_g$ copies are allocated to each agent, and the constraints $\{ y_{i,g} \leq 1 \mid i \in N, g \in M\}$ will ensure that no single agent receives more than $1$ copy of any good.

The fact that the fractional allocation $\xx$ will adhere to the constraints in $\hierarchy_1$ is self-evident. Its primary purpose is to ensure our fairness guarantees hold ex post, however, note that $\hierarchy_1$ also guarantees that if $\xx$ is a balanced fractional allocation, then each of the integral allocations in its support will also be balanced. This is due to the fact that the constraints enforced by $\hierarchy_1$ for $t = m$ will be $\{\lfloor Q_{i, m} \rfloor \leq \sum\nolimits_{j \in [m]} y_{i, \pref_i(j)} \leq \lceil Q_{i, m} \rceil  \mid i \in N\}$. Here, when $\xx$ is balanced, we have that $Q_{i,m} = \sum\nolimits_{j \in M} x_{i,j} = m/n$ and $\sum\nolimits_{j \in [m]} y_{i, \pref_i(j)} = \sum\nolimits_{j \in [m]} y_{i,j}$.

The Prop1 guarantee follows from \Cref{lem:util-guarantee}, as in the proof of \cite{AFSV24}. Either \( v_i(A_i) \ge v_i(\xx_i) \ge v_i(M)/n \), where the last inequality holds because EF implies Prop, or by \Cref{lem:util-guarantee}, there exists a good \( g_i \notin A_i \) such that \( v_i(A_i \cup \set{g_i}) \ge v_i(\xx_i) \ge v_i(M)/n \).

Next, we prove the \efoo guarantee. Consider agents \( i \) and \( j \) such that \( v_i(A_i) < v_i(A_j) \). By \Cref{lem:util-guarantee}, there exists a good \( g_i \) with \( x_{i, g_i} < 1 \) such that 
\begin{equation}
\label{eq:util-i}
    v_i(A_i \cup \{g_i\}) > v_i(\xx_i),
\end{equation}
and a good \( g_j \) with \( x_{j, g_j} < 1 \) (i.e., partially allocated to \( j \)) such that 
\begin{equation}
\label{eq:util-j}
    v_j(A_j \setminus \{g_j\}) < v_j(\xx_j).
\end{equation}

We show that \( \pp(A_j \setminus \{g_j\}) < \pp(\xx_j) \). Let \( \xx^{\afull}_j \) be the bundle containing all the goods that are fully allocated to \( j \). We also have \( \xx^{\afull}_j \subseteq A_j \setminus \{g_j\} \), since the method of \cite{BCKM13} ensures that any good fully allocated to one agent remains allocated to the same agent in all integral allocations. Rewriting, we have \( v_j(A_j \setminus \{g_j\}) = v_j(\xx^{\afull}_j) + v_j(A_j \setminus (\xx^{\afull}_j \cup \{g_j\})) \) and \( v_j(\xx_j) = v_j(\xx^{\afull}_j) + v_j(\xx_j \setminus \xx^{\afull}_j) \). Combined with Eq.~\eqref{eq:util-j}, we obtain
\[
v_j(A_j \setminus (\xx^{\afull}_j \cup \{g_j\})) < v_j(\xx_j \setminus \xx^{\afull}_j).
\]

Both sides of the inequality consist of goods that are partially allocated to \( j \) in \( \xx \). By \Cref{lem:bb-relation}, all goods involved on both sides must have the same bang-per-buck ratio. Hence, by multiplying both sides by that ratio, we get
\[
\pp(A_j \setminus (\xx^{\afull}_j \cup \{g_j\})) < \pp(\xx_j \setminus \xx^{\afull}_j),
\]
and after adding \( \pp(\xx^{\afull}_j) \) to both sides, we have
\[
\pp(A_j \setminus \{g_j\}) < \pp(\xx_j) \le 1.
\]

Since \( A_j \setminus \{g_j\} \) fits within the budget of 1, by the best bundle condition of CE, \( v_j(A_j \setminus \{g_j\}) \le v_j(A_j) \). Along with Eq.~\eqref{eq:util-i}, we conclude that \( v_i(A_i \cup \{g_i\}) > v_i(A_j \setminus \{g_j\}) \), as desired.
\end{proof}

\subsection{Equivalence of \efoo for chores and goods-with-copies.}
It is clear to see that the guarantees we receive for goods with copies constraints directly imply the same guarantees in the unconstrained world, one simply has to set the number of copies for all goods to $1$. We also argue that this result implies the same guarantees for unconstrained instances with chores rather than goods.

For full context, the definition for \efoo for chores is as follows

\begin{definition}[\efoo for Chores]
An integral allocation $A$ is \emph{envy-free up to one good more-and-less} (\efoo) if, for every pair of agents $i, j \in N$ such that $A_i \ne \emptyset$ and $A_j \ne M$, $v_i(A_i \setminus \set{c_i}) \geq v_i(A_j \cup \set{c_j})$ for some $c_i \in A_i$ and $c_j \not\in A_j$.
\end{definition}

When \citep{GHLT24} introduced the goods with copies framework, they showed that when an allocation is \efwc in an instance where each good has $n-1$ copies, the allocation over the duel chores instance induced by the goods with copies will be EF1. \efwc is a stronger notion than EF1, and one may note that we are only claiming to find \efoo over goods with copies instances, rather than some analogous ``WC'' strengthening of it. In the below lemma, we will show that in fact, \efoo is a unique case of fairness property, where finding it in the goods with copies instance directly implies its existence in the duel chores instance.

\begin{lemma}
    The existence of an \efoo allocation in a goods with copies constrained instance with $n-1$ copies of each good implies the existence of an \efoo allocation in the duel chores instance.
\end{lemma}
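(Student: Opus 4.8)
The plan is to set up an explicit bijection between allocations of the dual chores instance and \emph{complete} allocations of the goods-with-copies instance, and to show that this bijection carries \efoo to \efoo. Write $d_i(c) \ge 0$ for agent $i$'s cost (disvalue) of chore $c$, so that in the chores instance $v_i(S) = -\sum_{c \in S} d_i(c)$, and in the goods instance set $v_i(g_c) = d_i(c)$ for the good $g_c$ corresponding to chore $c$. Since each good has exactly $n-1$ copies and each agent may hold at most one copy, a complete allocation gives every good to exactly $n-1$ of the $n$ agents, leaving exactly one agent without a copy; the bijection assigns chore $c$ to the unique agent who does \emph{not} receive a copy of $g_c$. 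Treating each bundle as its set of good \emph{types} (feasibility forces at most one copy of each type), agent $i$'s goods bundle $A_i$ is precisely the complement of her chore bundle $A'_i$ among all types. This yields the key identity $v_i(A_i) = V_i + v_i(A'_i)$, where $V_i = \sum_c d_i(c)$ is a constant and the first $v_i$ is read in the goods instance while the second is read in the chores instance; the same identity holds with $A_j, A'_j$ in place of $A_i, A'_i$.

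With this dictionary in hand, the second step is purely algebraic. Under the correspondence, a type $g_i \notin A_i$ is exactly a chore $c_i \in A'_i$ (one $i$ is responsible for), and a type $g_j \in A_j$ is exactly a chore $c_j \notin A'_j$ (one $j$ is not responsible for). Substituting the identity into the goods-\efoo inequality $v_i(A_i \cup \set{g_i}) \ge v_i(A_j \setminus \set{g_j})$ and cancelling the common constant $V_i$ on both sides turns it, after rearrangement, into exactly $v_i(A'_i \setminus \set{c_i}) \ge v_i(A'_j \cup \set{c_j})$, which is the chores-\efoo inequality for the pair $(i,j)$. The roles line up as expected: adding a good to $i$ and removing a good from $j$ (helping $i$, hurting $j$) becomes removing a chore from $i$ and adding a chore to $j$. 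The boundary conditions transfer cleanly too: the chores precondition $A'_i \ne \emptyset$ means $i$ lacks some type, i.e. $A_i \ne M$, and $A'_j \ne M$ means $j$ holds some type, i.e. $A_j \ne \emptyset$; hence whenever the chores condition is non-vacuous, the goods-\efoo hypothesis applies and supplies the required $g_i, g_j$.

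The step needing the most care is the interpretation of ``adding a good'' in the goods-with-copies world, and this is where I expect the main subtlety. For the correspondence $g_i \notin A_i \leftrightarrow c_i \in A'_i$ to be exact, the good added to $i$ must be a \emph{new type} (one keeping $A_i \cup \set{g_i}$ feasible) rather than a redundant second copy of a type $i$ already holds; under the copies structure a duplicate contributes no genuine marginal value (formally it is worth nothing under the feasible-subset valuation $\tv_i$ from \Cref{thm:matroid}), so the value-maximizing addition is always a fresh type and the \efoo guarantee may be read with $g_i$ ranging over types $i$ lacks. One should also confirm that completeness is available for free: the \efoo allocations for goods with copies (e.g. in the support of \Cref{thm:bobw}, where strictly positive values and market clearing force $\sum_i x_{i,g} = n-1$ for every good) allocate all $n-1$ copies of each good, so the chore map is well defined; alternatively, an incomplete \efoo allocation can first be completed by handing out leftover copies. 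Once these two points are pinned down, the two displayed inequalities are literally the same statement, and the lemma follows.
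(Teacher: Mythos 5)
Your proposal is correct and follows essentially the same route as the paper: the complement bijection between chore bundles and (complete) goods bundles with $n-1$ copies, the constant-shift identity $v_i(A_i) = V_i + v_i(A'_i)$ (the paper writes it as $v_i(B_i) = v_i(A_i) - v_i(S)$), and the algebraic cancellation that turns the goods-\efoo inequality into the chores-\efoo inequality; the paper merely packages this as a proof by contradiction. If anything you are more careful than the paper's proof, which silently assumes the \efoo witness $g_i \notin A_i$ is a type agent $i$ lacks entirely (rather than a redundant second copy of a type she already holds) and does not discuss the boundary conditions, both of which you flag explicitly.
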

\begin{proof}
    For contradiction, assume this is false. Let $A$ be an \efoo allocation over a goods with copies instance with $n-1$ copies of each good, and let $B$ be the allocation induced from $A$ in the duel chores instance (For every good $g$ with $n-1$ copies in the original instance, there is a chore $c_g$ in the duel instance with a value of $v_i(c_g) = -v_i(g)$ for all $i \in N$. The allocation $B$ is constructed by allocating each chore $c_g$ to the unique agent $j \in N$ such that $g \not\in A_j$). Assume that $B$ is not \efoo, so there exists a pair of agents $i,j \in N$, such that for all $c_i \in B_i, c_j \not \in B_j$, we have that $v_i(B_i \setminus \set{c_i}) < v_i(B_j \cup \set{c_j})$.

    Let the set $S$ represent a set containing a single copy of each good in the original goods with copies instance. From the way we constructed the duel chores instance, we must have that $v_i(B_i) = v_i(A_i) - v_i(S)$ for all $i \in N$. Also note that in $A$, we must have that $v_i(A_i \cup \set{g_i}) \geq v_i(A_j \setminus \set{g_j})$ for some goods $g_i \not\in A_i, g_j \in A_j$. 

    This fact implies that we can take the two corresponding chores in the duel instance $c_{g_i} \in A_i, c_{g_j} \not\in A_j$, such that $v_i(c_{g_i}) = -v_i(g_i)$ and $v_i(c_{g_j}) = -v_i(g_j)$, and from this, we can rewrite the $i,j$ \efoo guarantee from the goods with copies instance as the following:
    \[v_i(B_i) + v_i(S) - v_i(c_{g_i}) \geq v_i(B_j) + v_i(S) + v_i(c_{g_j})
    \]
    removing the $v_i(S)$ from both sides get us
    \[
    v_i(B_i) - v_i(c_{g_i}) \geq v_i(B_j) + v_i(c_{g_j})
    \]
    which means that $v_i(B_i \setminus \set{c_{g_i}})  \geq v_i(B_j \cup \set{c_{g_j}})$, giving a contradiction. 
\end{proof}

\subsection{Extensions to Broader Constraints}

Note the missing guarantee of \efoo when balancedness is imposed in \Cref{thm:bobw}. Our approach closely follows the decomposition of a fractional CEEI allocation in the \emph{unconstrained} case due to \cite{AFSV24}; however, there are key differences that we highlight. While our Prop1 and PO guarantees follow directly from lemmas they prove (\Cref{prp:ex-ante-po-ex-post-po} and \Cref{lem:util-guarantee}), our \efoo guarantee does not. They use a property of fractional MNW allocations, which are CEEI in the unconstrained case. We need to establish an appropriately relaxed version of this property (\Cref{lem:bb-relation}) for CEEI allocations in our constrained case, which may be of independent interest. Finally, our algorithm is similar to theirs, with a slightly different $\hierarchy_2$ hierarchy in the bihierarchy rounding of \cite{BCKM13} due to working with goods with copies. As stated in the proof of \Cref{thm:bobw}, $\hierarchy_2$ will enforce that each integral allocation in the support of the CEEI will meet the goods with copies constraints, while $\hierarchy_1$ enforces that if the CEEI is also balanced, then each integral allocation in the support will be balanced as well.

Goods with copies and balancedness constrains both have simple structures that allow their enforcement to be integrated easily into the existing structure of the hierarchies from the Utility++ guarentee. Unfortunately, this does not seem to be the case for more general constraints. Take partition matroids for example. The CEEI theorem of \cite{echenique2021constrained} is general enough to allow us to find a fractional CEEI that meets a given set of partition matroid constraints, however, there is no way that we know of to break down this fractional allocation into a convex combination of integral allocations which also meet the given constraints.

The easiest way to attempt to do this would be to add additional constraints to $\mathcal{H}$ to ensure that the partition matroid constraints are met. For a given partition matroid $\calC$, such a constraint set would look like:

\begin{align*}
    \hierarchy_3 &= \left
    \{0 \leq \sum\nolimits_{j \in C} y_{i,j} \leq h_C  \mid i \in N, C \in \mathcal{C}
    \right\}, \\
\end{align*}

But unlike balancedness or goods with copies, this new constraint set does not fit into one of the existing hierarchies $\mathcal{H}_1$ and $\mathcal{H}_2$. Enforcing them in addition to the other two hierarchies (which are both necessary for the Utility Guarantee++ lemma to work) would create a tri-hierarchy constraint, something which is not covered by the results of \cite{BCKM13}.

\section{Discussion}\label{sec:discussion}
Our work is only the first step towards a systematic study of fair and efficient allocations under feasibility constraints. In addition to resolving the major open questions listed in \Cref{sec:implications}, the field pursue several exciting research directions. 

\paragraph{Computation.} We did not discuss how to compute MNW allocations subject to any of the various feasibility constraints we study. It is likely a hard problem in each case, as it is for the unconstrained case~\citep{Lee15}. For the unconstrained case, the literature offers polynomial-time constant-approximations of the Nash welfare objective (the current best being a 1.45-approximation due to \cite{BKV18}), often together with (approximate) fairness guarantees~\citep{mcglaughlin2020improving,chaudhury2022fair}. Such a result under feasibility constraints would be exciting.  

\paragraph{Pushing for stronger BoBW guarantees.} In contrast to our deterministic guarantees from \Cref{sec:nash}, our BoBW guarantees from \Cref{sec:bobw} hold only for the special cases of goods with copies and/or balancedness. Can such guarantees be established even for partition matroid constraints? We also understand the limits of fractional allocations a lot less in the constrained setting. To that end, we make an observation for linearly-constrained divisible economies (\Cref{def:linear}). While CEEI allocations exist and achieve EF and PO (\Cref{thm:ce-linear-constraints}), fractional MNW allocations, which may not be CEEI, are still \half-EF and PO. This is because \cite{EFS24} establish that fractional MNW allocations among a convex constraint set have an individual harm ratio of $1$, and this immediately implies \half-EF, generalizing a recent result by \cite{troebst2024cardinal} that a fractional MNW \emph{matching} (i.e., when $m=n$ and balancedness is imposed) is \half-EF and PO. It is also worth noting CEEI allocations can achieve the appealing guarantees of EF and PO, there are desirable fairness and efficiency guarantees that cannot be achieved simultaneously, even by fractional allocations under a partition matroid constraint~\citep{KawaseSY23}.  

\paragraph{Extending \efwc beyond goods with copies.} \efwc is an appealing strengthening of EF1, but defined narrowly for the case of goods with copies. Can we define viable strengthenings of EF1 for more general settings (e.g., partition matroid constraints, or even arbitrary matroid constraints), which reduce to \efwc for the special case of goods with copies?

\section*{Acknowledgements}
This research was partially supported by an NSERC Discovery grant.

\bibliography{abb,ultimate,nisarg}

\clearpage
\appendix
\section*{\centering Appendix}

\section{Omitted Details from \Cref{sec:nash}}\label{app:nash}

\subsection{Inefficiency of Round Robin}\label{app:round-robin-inefficient}

Through the following example, we will show an instance where no round-robin allocation is PO.

\begin{example}\label{ex:roundrobin-PO-NO}
    Consider an instance with two agents and $8$ goods, where the agents have the following preferences:

    \begin{table}[ht]
        \centering
        \begin{tabular}{c | c c c c c c c c}
            \toprule
                   & $g_1$ & $g_2$ & $g_3$ & $g_4$ & $g_5$ & $g_6$ & $g_7$ & $g_8$ \\
            \midrule
             Agent 1 & 10    & 9     & 5     & 4     & 3     & 2     & 1     & 0     \\
             Agent 2 & 10    & 9     & 8     & 7     & 6     & 5     & 1     & 0     \\
             \bottomrule
        \end{tabular}
    \end{table}

    Observe the best possible bundle that each agent can receive under a round-robin ordering. Since both of the agents have the same strict ordering of the goods ($v_i(g_1) > v_i(g_2) > \dots > v_i(g_8)$ for both $i \in [2]$), the best possible round-robin bundle for each agent will be the bundle $B = \set{g_1,g_3,g_5,g_7}$. For agent $1$, we have $v_1(B) = 19$, and for agent $2$, we have $v_2(B) = 25$.

    However, consider the allocation $A = (\set{g_1,g_2,g_7,g_8},\set{g_3,g_4,g_5,g_6})$. Agents preferences in $A$ will be $v_1(A_1) = 20 > v_1(B)$ and $v_2(A_2) = 26 > v_2(B)$, showing that this allocation will Pareto Dominate any round-robin allocation.
\end{example}

\subsection{Lower Bound for MNW Fairness Guarentees with Good with Copies Constraints}\label{app:lowerboundcopies}

\begin{restatable}{theorem}{copieslowerbound}\label{thm:copieslowerbound}
    For any $\epsilon > 0$, there exist instances with the goods-with-copies constraint in which no MNW (or complete MNW) allocation is $(\half + \epsilon)$-EF1.
\end{restatable}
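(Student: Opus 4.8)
The plan is to mirror the partition-matroid construction of \Cref{thm:mnw-tight}, but to realize the ``each agent is full'' pressure through the copy structure rather than through a global cardinality bound. I would take $n = 2$ agents and two kinds of categories. There are $k$ \emph{shared} categories, each consisting of two copies of a good that agent $1$ values at a tiny $\delta \in (0, 1/k)$ and agent $2$ values at $1$; and $k$ \emph{free} categories, each a single good valued at $1$ by both agents. This is a valid goods-with-copies instance: every category has $h_C = 1$, size at most $n = 2$, and identical within-category values. Since both agents value each shared good strictly positively, any MNW allocation hands each agent exactly one copy from every shared category; the single free goods are then allocated so as to maximize the product. Thus the instance behaves like an unconstrained split of the free goods layered on top of a forced symmetric split of the shared copies.

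Next I would pin down the (essentially unique) MNW allocation. Because $\delta > 0$, every MNW allocation must give agent $1$ its copy of each shared good (discarding it would strictly lower agent $1$'s utility and hence the Nash welfare), and agent $2$ likewise keeps its copy of each shared good; so both agents hold all $k$ shared copies and the allocation is automatically complete, i.e.\ MNW $=$ complete MNW here. Writing $x$ for the number of free goods assigned to agent $1$, the Nash welfare is $f(x) = (k\delta + x)(2k - x)$, a downward parabola maximized over $\set{0,\dots,k}$ at $x = k$ precisely because $\delta < 1/k$ (indeed $f(k) - f(k-1) = 1 - k\delta > 0$, and concavity extends this to all smaller integers). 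Hence in the unique MNW (and complete MNW) allocation $A$, agent $1$ receives all $k$ free goods together with its $k$ shared copies, while agent $2$ receives only its $k$ shared copies. Then $v_2(A_2) = k$, whereas agent $2$ values every one of the $2k$ goods in $A_1$ at $1$, so $v_2(A_1 \setminus \set{g}) = 2k - 1$ for every $g \in A_1$. The best envy-elimination therefore yields the ratio $v_2(A_2)/v_2(A_1 \setminus \set{g}) = k/(2k-1) = \half + \tfrac{1}{2(2k-1)}$, and choosing $k > \tfrac{1}{4\epsilon} + \half$ makes this strictly below $\half + \epsilon$, so no MNW (equivalently, no complete MNW) allocation is $(\half + \epsilon)$-EF1.

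The main obstacle is not the envy arithmetic but certifying that \emph{every} MNW and complete MNW allocation is the bad one, which hinges on the twofold role of $\delta$. It must be strictly positive so that the worthless-to-agent-$1$ shared copies are nonetheless forced into agent $1$'s bundle in every (even incomplete) MNW allocation; otherwise an MNW allocation could simply leave those copies with value-$\delta=0$ unallocated, shrinking $v_2(A_1)$ and destroying the envy. Yet it must satisfy $\delta < 1/k$ so that the free goods are still driven entirely to agent $1$ at the corner $x = k$. Verifying both the uniqueness of the integer maximizer of $f$ and that no MNW allocation can be incomplete (so that the $(\half+\epsilon)$-EF1 failure is genuinely unavoidable across all MNW allocations) is the part that needs care; the remaining steps are routine.
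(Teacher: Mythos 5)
Your construction is essentially identical to the paper's own proof in \Cref{app:lowerboundcopies}: the same $k$ two-copy goods valued $\delta<\nicefrac{1}{k}$ by agent $1$ and $1$ by agent $2$, the same $k$ single-copy goods valued $1$ by both, the same Nash-welfare function $(k\delta+x)(2k-x)$ with unique integer maximizer $x=k$, and the same final ratio $\frac{k}{2k-1}$. The argument is correct (and you are, if anything, slightly more careful than the paper in justifying why every MNW allocation must allocate both copies of each shared good), so nothing further is needed.
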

\begin{proof}
    Let $k \in \bbN$ and fix $\delta < \nicefrac{1}{k}$. Consider the following instance with $2$ agents and $2k$ goods. The goods are partitioned into $S$ and $T$, each consisting of $k$ goods. There are $2$ copies of each of the $k$ goods in $S$, and for each good $g_s \in S$, we have $v_1(g_s) = \delta$ and $v_2(g_s) = 1$. There is a single copy of each good $g_t \in T$, with agent valuations being $v_1(g_t) = v_2(g_t) = 1$.

    Consider any allocation $A$. First, notice that since there are two copies of every good in $S$, the goods-with-copies constraint enforces that each agent receive one copy of each good in $S$. Suppose $x = |A_1 \cap T|$. Then, we have $v_1(A_1) = k\delta + x$ and $v_2(A_2) \le k + (k-x) = 2k - x$. Thus, $\NW(A) \le (k\delta + x) \cdot (2k - x)$. The right hand side is uniquely maximized at $x = k - \frac{k\delta}{2}$, and, given the choice of $\delta$, it is easy to see that the integral value of $x$ that uniquely maximizes it is $x = k$: $x = k$ yields a value of $k^2 \cdot (1+\delta)$, and $(k\delta + x) \cdot (2k - x) \ge k^2 (1+\delta)$ implies $x \in [k - k\delta, k]$, with $k$ being the only integer in that range due to our choice of $\delta$. Consider the allocation $A^* = (S \cup T, S)$, corresponding to $x=k$, in which agent $1$ receives one copy of each good in $S$ along with all of $T$ while agent $2$ receives one copy of each good in $S$. Since this achieves the maximum right hand side value, it must be the unique MNW (and complete MNW) allocation in this instance. 
    
    In this allocation, agent $1$ is not envious, but, for agent $2$, we have $v_2(A^*_2) = k$ and $v_2(A^*_1 \setminus \set{g}) = 2k-1$ for every $g \in A^*_1$. Hence, this allocation is $\frac{k}{2k-1}$-EF1, i.e., $(\frac{1}{2} + \frac{1}{4k-2})$-EF1. It is clear that for any $\epsilon > 0$, we can choose $k > \frac{1}{2} + \frac{1}{4\epsilon}$ to ensure that the allocation is not $(\half + \epsilon)$-EF1.
\end{proof}

\section{Matroid Preprocessing Step From \cite{dror2023fair}}\label{app:preprocessing}

The process of \cite{dror2023fair} involves performing a number of ``Free Extensions'' on the instance. For some matroid $\mathcal{M} = (M,\mathcal{I})$ with rank $r$, performing a free extension on $\mathcal{M}$ means creating a new matroid $\mathcal{M'} = (M',\mathcal{I}')$ by adding a new good $x$, such that $M' = M \cup \set{x}$ and $\mathcal{I} = \set{I \cup \set{x} | I \in \mathcal{I}, |I| < r}$, i.e., it adds $x$ to all independent sets that have size less than the rank of the matroid. Note that a free-extension will never increase the rank of the matroid.

The preprocessing step will keep preforming free extensions on $\mathcal{M}$, until we end up with a new matroid $\mathcal{M'} = (M',\mathcal{I}')$ where $|M'| = nr$. Each free extension will involve us adding a ``dummy good'' to $M$ (a good that each agent has a valuation of $0$ for). One can immediately see that since the rank of the of $\mathcal{M'}$ is $r$, any complete and feasible allocation of $M'$ must give exactly $r$ goods to each agent.

\cite{dror2023fair} prove that the matroid $\mathcal{M'}$ constructed in the preprocessing step provides the following guarantees:
\begin{observation}[Observation 2 of \cite{dror2023fair}]
    The matroid $\mathcal{M'}$ constructed in the preprocessing step provides the following guarantees:
    \begin{itemize}
        \item All bases of $\mathcal{M}$ are bases of $\mathcal{M}'$
        \item For every complete allocation $A' \in \mathcal{F}'$, $|A'_i| = r$ for all $i \in N$. Therefore, every $A'_i$ is a basis of $\mathcal{M}'$.
        \item For every complete allocation $A \in \mathcal{F}$, there is a complete allocation $A' \in \mathcal{F}'$, where for all $i \in N$, $A'_i$ contains $A_i$ plus zero or more dummy goods, Therefore, $v_j(A_i) = v_j(A'_i)$ for all $j \in N$.
        \item For every complete allocation $A' \in \mathcal{F}'$, there is a complete allocation $A \in \mathcal{F}$, where for all $i \in N$, $A_i$ equals $A'_i$ with all dummy goods removed. Therefore, $v_j(A'_i) = v_j(A_i)$ for all $j \in N$.
        \item $\mathcal{M}'$ is base-orderable if and only if $\mathcal{M}$ is base-orderable. 
    \end{itemize}
\end{observation}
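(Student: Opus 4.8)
The plan is to first distill the cumulative effect of the free extensions into one clean description of the independent sets of the fully padded matroid, and then read off the first four bullets almost immediately, leaving the base-orderability equivalence (the last bullet) as the genuine work.

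\textbf{Step 1 (characterize $\calI'$).} I would prove, by induction on the number of free extensions, that after padding $M$ with the dummy set $D$ (so $M' = M \cup D$ and $|M'| = nr$), the independent sets of $\mathcal{M}'$ are exactly
\[
\calI' = \set{S \subseteq M' : S \cap M \in \calI \text{ and } |S| \le r}.
\]
The base case is vacuous, and the inductive step just unfolds the free-extension rule $\calI_{D \cup \set{d}} = \calI_D \cup \set{I \cup \set{d} : I \in \calI_D,\ |I| < r}$: splitting on whether $d \in S$ reproduces the displayed form. In particular each free extension keeps the rank at $r$ (for $r \ge 1$), and the dummies are ``free,'' i.e.\ any dummy may be appended to any independent set of size $< r$.

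\textbf{Step 2 (bullets 1--4).} These are short given Step 1. A basis $B$ of $\mathcal{M}$ has $B \cap M = B \in \calI$ and $|B| = r$, so $B \in \calI'$ and, the rank still being $r$, $B$ is a basis of $\mathcal{M}'$. For bullet 2, a complete $A' \in \mathcal{F}'$ partitions $M'$ into $n$ independent sets of size $\le r$, and $\sum_i |A'_i| = |M'| = nr$ forces every $|A'_i| = r$, hence a basis. For bullet 3, given complete $A \in \mathcal{F}$, I hand agent $i$ exactly $r - |A_i|$ dummies; the counts match since $\sum_i (r - |A_i|) = nr - |M| = |D|$, and $A'_i = A_i \cup (\text{dummies}) \in \calI'$ by Step 1 (its $M$-part is $A_i$, its size is $r$). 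Bullet 4 is the reverse: set $A_i = A'_i \cap M \in \calI$, whose union is $M' \cap M = M$. Since dummies carry value $0$ for every agent, utilities are preserved in both directions, giving $v_j(A_i) = v_j(A'_i)$.

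\textbf{Step 3 (bullet 5, the crux).} To show $\mathcal{M}'$ is base-orderable iff $\mathcal{M}$ is, I would prove the per-extension statement for a single added element $x$ and then induct. The deletion direction ($\mathcal{M} = \mathcal{M}^+ \!\mid M$) is easy: bases of $\mathcal{M}$ are exactly the bases of $\mathcal{M}^+$ avoiding $x$, so a base-ordering bijection in $\mathcal{M}^+$ never touches $x$ and restricts to one in $\mathcal{M}$. For the forward direction, take bases $B_1', B_2'$ of $\mathcal{M}^+$ and split on how many contain $x$. If neither does, they are bases of $\mathcal{M}$ and its bijection works verbatim. If exactly one does, say $B_1' = I_1 \cup \set{x}$ with $|I_1| = r-1$, I extend $I_1$ to a basis $\hat{B}_1 = I_1 \cup \set{y}$ of $\mathcal{M}$, apply base-orderability of $\mathcal{M}$ to $\hat{B}_1, B_2'$, and transport the bijection, sending $x$ to the image of $y$; every swap needed in $\mathcal{M}^+$ holds because the swapped set's $M$-part is a subset of a genuine $\mathcal{M}$-basis of the right size.

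\textbf{The main obstacle} is the remaining case where \emph{both} bases contain $x$, i.e.\ $B_1' = I_1 \cup \set{x}$, $B_2' = I_2 \cup \set{x}$ with $|I_1| = |I_2| = r-1$. A naive choice $f(x) = \pi(y_1)$ fails, since one cannot re-add $x$ to $B_2'$, which already contains it; and a general ``independent-set orderability'' does not follow from plain base-orderability. I would instead force $f(x) = x$ (making the exchange at $x$ trivial) and build the bijection $I_1 \to I_2$ by extending to bases $\hat{B}_1 = I_1 \cup \set{y_1}$, $\hat{B}_2 = I_2 \cup \set{y_2}$ and taking $\mathcal{M}$'s base-ordering bijection $\pi$. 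If $\pi(y_1) = y_2$ then $\pi$ restricts to $I_1 \to I_2$; otherwise, writing $z = \pi(y_1) \in I_2$ and $w = \pi^{-1}(y_2) \in I_1$, I reroute to $f(w) = z$ and $f = \pi$ elsewhere. The two exchanges for $w$ survive because $\pi$'s exchange property at $y_1$ and at $w$ shows $I_1 \cup \set{z}$ and $I_2 \cup \set{w}$ are $\mathcal{M}$-bases, so deleting $w$ (resp.\ $z$) leaves precisely the size-$(r-1)$ independent $M$-parts needed after reattaching $x$. This rerouting works \emph{only} because the $M$-parts have size exactly $r-1$, so one extension element suffices; this is the delicate point of the whole Observation. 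Induction over the $nr - |M|$ free extensions then yields bullet 5 and completes the proof.
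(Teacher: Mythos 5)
Your proposal is correct, but it is doing genuinely more work than the paper does: the paper does not prove this statement at all --- it is imported verbatim as Observation~2 of \cite{dror2023fair}, and \Cref{app:preprocessing} only states the preprocessing construction and then uses the observation as a black box to argue the equivalence of complete MNW allocations. So there is no in-paper proof to match; what you have written is a self-contained derivation. Your Step~1 characterization $\calI' = \set{S \subseteq M' : S \cap M \in \calI,\ |S| \le r}$ is the right normal form and makes bullets 1--4 immediate exactly as you say. For bullet~5, your case analysis on how many of the two bases contain the new element $x$ is sound, and the rerouting in the ``both contain $x$'' case does go through: with $z = \pi(y_1)$ and $w = \pi^{-1}(y_2)$, the sets $I_1 \setminus \set{w} \cup \set{z}$ and $I_2 \setminus \set{z} \cup \set{w}$ sit inside the bases $\hat{B}_1 \setminus \set{y_1} \cup \set{z}$ and $\hat{B}_2 \setminus \set{y_2} \cup \set{w}$ respectively, so they are independent of size $r-1$ and reattaching $x$ gives bases of the extension. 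Two points you lean on without stating: (i) that a free extension of a matroid is again a matroid (needed before ``rank'' and ``basis'' make sense in $\mathcal{M}'$; standard, but worth a line via the exchange property and your Step~1 formula), and (ii) that the base-ordering bijection can be taken to fix $S \cap T$ pointwise --- the paper's definition writes $S \cup \set{f(g)} \setminus \set{g}$, which fails to have cardinality $r$ if $f(g) \in S \setminus \set{g}$, so this normalization is implicitly required both in your ``exactly one contains $x$'' case and when you restrict $\pi$ to $I_1 \to I_2$. Both are routine, and with them your argument is complete and more informative than the paper's citation.
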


Let $\mathcal{F}$ be the feasibility set of an instance constrained by a base-orderable matroid $\mathcal{M}$, and let $\mathcal{F}'$ be the feasibility set of the instance constrained by $\mathcal{M}'$, the matroid constructed by performing the preprocessing step on $\mathcal{M}$.

We can say that an allocation $A' \in \mathcal{F}'$ will be a complete MNW allocation for $\mathcal{F}'$ if and only if the corresponding allocation $A \in \mathcal{F}$, constructed by removing all the dummy goods from $A'$, is a complete MNW allocation for $\mathcal{F}$.

First, in the case where an MNW allocation $A' \in \mathcal{F}'$ has $NW(A') > 0$, this follows from the fact that all agents' utilities remain the same after dummy goods are removed. Thus, for the allocation $A \in \mathcal{F}$ formed by removing the dummy goods from $A'$, we must have $NW(A) = NW(A')$. If there were another allocation $A^* \in \mathcal{F}$ such that $NW(A^*) > NW(A)$, then we could add dummy goods to each bundle of $A^*$ until $|A^*_i| = r$ for all $i \in N$. This new allocation must be in $\mathcal{F}'$, and must have higher Nash Welfare than $A'$, giving a contradiction. Conversely, if $A \in \mathcal{F}$ maximizes Nash Welfare for $\mathcal{F}$, with $NW(A) > 0$, but its corresponding allocation $A' \in \mathcal{F}'$ does not, then the same contradiction can be arrived at by finding the MNW allocation for $\mathcal{F}'$ and removing the dummy goods.

Next, in the case where an MNW allocation $A' \in \mathcal{F}'$ has $NW(A') = 0$, it must be the case that $A'$ maximizes the number of agents that receive positive utility, and with respect to this, maximizes the product of positive utilities. Notice that after removing the dummy goods from $A'$, we get an allocation $A$ where the exact same agents get positive utility, and the product of those agents' utilities are the same. If there is some $A^* \in \mathcal{F}$ where $NW(A^*) > 0$, then that gives a contradiction, as adding dummy goods to $A^*$ would create an allocation in $\mathcal{F}'$ with positive Nash Welfare. If there is an allocation in $A^*$ where more agents receive positive utility, or the product of positive utilities is higher, then we can add dummy goods to create an allocation from $\mathcal{F}'$ with the same guarantees, giving another contradiction. Conversely, if $A \in \mathcal{F}$ is the MNW allocation for $\mathcal{F}$ and $NW(A) = 0$, but the corresponding allocation $A' \in \mathcal{F}'$ is not MNW, then the same contradictions can be arrived at by finding the MNW allocation for $\mathcal{F}'$ and removing the dummy goods.
\end{document}